\def\compactify{\itemsep=0pt \topsep=0pt \partopsep=0pt \parsep=0pt}
\let\latexusecounter=\usecounter
\newenvironment{CompactEnumerate}
  {\def\usecounter{\compactify\latexusecounter}
   \begin{enumerate}}
  {\end{enumerate}\let\usecounter=\latexusecounter}
{\begin{itemize}%
\setlength{\itemsep}{0pt}%
\setlength{\topsep}{0pt}%
\setlength{\partopsep}{0 in}%
\setlength{\parskip}{0 pt}}%
{\end{itemize}}
\newenvironment{pproof}
{ \noindent \textit{Proof.}  }
{ \hfill \rule{1.5ex}{1.5ex} }
\newcommand{\Domain}{\mathcal{D}}
\newcommand{\Support}{\mathcal{X}}
\newcommand{\tl}{\textlatin}
\newcommand{\com}[1]{}
\def\poly{\mathrm{poly}}
\def\eps{\varepsilon}
\def\floor#1{\mathop{\left\lfloor#1\right\rfloor}}
\def\ceil#1{\mathop{\left\lceil#1\right\rceil}}
\def\PLS{\textsc{PLS}}
\def\PPAD{\textsc{PPAD}}
\def\CLS{\textsc{CLS}}
\def\Banach{\textsc{MetricBanach}}
\def\Banachh{\textsc{Banach}}
\def\clocal{\textsc{Continuous LocalOpt}}
\def\BIP{Basic Iterative Procedure}
\def\Bsf#1{B\left(#1\right)}
\newcommand{\reals}{\mathbb{R}}
\newcommand{\nats}{\mathbb{N}}
\def\Int{\mathrm{Int}}
\def\Clos{\mathrm{Clos}}
\def\pot{p}
\def\norm#1{\left\|#1\right\|}
\def\abs#1{\left|#1\right|}
\newtheorem{theorem}{Theorem}
\newtheorem*{BanachFPT}{Banach's Fixed Point Theorem}
\newtheorem*{BessagaCT}{Bessaga's Converse Theorem}
\newtheorem*{MeyersCT}{Meyers's Converse Theorem}
\newtheorem{proposition}{Proposition}
\newtheorem{definition}{Definition}
\newtheorem{corollary}{Corollary}
\newtheorem{lemma}{Lemma}
\newtheorem{remark}{Remark}
\renewcommand{\vec}{\bm}
\newcommand{\diam}[2]{\mathrm{diam}_{#1}\left[ #2 \right]}
\definecolor{vergreen}{RGB}{0,85,2}
\definecolor{myvergreen}{RGB}{0,140,3}
\definecolor{provorange}{RGB}{85,34,0}
\definecolor{inputblue}{RGB}{5,13,111}
\definecolor{noapred}{RGB}{116,3,3}
\definecolor{classesblue}{RGB}{9,49,146}
\definecolor{secinhead}{RGB}{249,196,95}
\definecolor{lgray}{gray}{0.8}
\newenvironment{prevproof}[2]{\noindent {\em {Proof of {#1}~\ref{#2}:}}}{$\hfill\qed$\vskip \belowdisplayskip}
\def\compactify{\itemsep=0pt \topsep=0pt \partopsep=0pt \parsep=0pt}
\let\latexusecounter=\usecounter
\newenvironment{Enumerate}
  {\def\usecounter{\compactify\latexusecounter}
   \begin{enumerate}}
  {\end{enumerate}\let\usecounter=\latexusecounter}
\begin{document}

\sethlcolor{lgray}

%
%

\title{A Converse to Banach's Fixed Point Theorem\\ and its CLS Completeness} 
\date{}
\author{
  Constantinos Daskalakis \\ EECS and CSAIL, MIT \\ \href{mailto:costis@mit.edu}{costis@mit.edu}
  \and Christos Tzamos \\ EECS and CSAIL, MIT \\ \href{mailto:tzamos@mit.edu}{tzamos@mit.edu}
	\and Manolis Zampetakis \\ EECS and CSAIL, MIT \\ \href{mailto:mzampet@mit.edu}{mzampet@mit.edu}
	}
\clearpage
\maketitle

\begin{abstract}

    Banach's fixed point theorem for contraction maps has been widely used to analyze the convergence of iterative methods in non-convex problems. It is a common experience, however, that iterative maps fail to be globally contracting under the natural metric in their domain, making the applicability of Banach's theorem limited. We explore how generally we can apply Banach's fixed point
  theorem to establish the convergence of iterative methods when pairing it with carefully designed metrics. 
	
	Our first result is a strong converse of Banach's theorem,
  showing that it is a {\em universal analysis tool} for establishing global convergence of iterative methods to unique fixed points, and for bounding their convergence rate. In other words, we show that, whenever an iterative map globally converges to a unique fixed point, there exists a metric under which the iterative map is contracting and which can be used to bound the number of iterations until convergence. We
  illustrate our approach in the widely used power method, providing a new way of bounding its convergence rate through contraction arguments.

  We next consider the computational complexity of Banach's fixed point theorem. Making the proof of our converse theorem constructive, we show that computing a fixed point whose existence is guaranteed by Banach's fixed point theorem is CLS-complete. We thus provide the first natural complete problem for the class CLS, which was defined in~\cite{DP11} to capture the complexity of problems such as P-matrix LCP, computing KKT-points, and finding mixed Nash equilibria in congestion and network coordination games.
\end{abstract}
\thispagestyle{empty}

\addtocounter{page}{-1}\newpage
\section{Introduction}
\label{sec:intro}

Several widely used computational methods are fixed point iteration methods. These include gradient descent, the power iteration method, alternating optimization, the expectation-maximization algorithm, $k$-means clustering, and others. In several important applications, we have theoretical guarantees for the convergence of these methods. For example, convergence to a unique solution can be guaranteed when the method is explicitly, or can be related to, gradient descent on a convex function~\cite{nemirovskinotes,nesterovnotes,boydnotes}. More broadly, convergence to a stationary point can be guaranteed when the method is, or can be related to, gradient descent; for some interesting recent work on the limit points of gradient descent, see~\cite{Piliouras,Jordan} and their references. 

Another, more general, style of analysis for proving convergence of fixed point iteration methods is via a potential (a.k.a. Lyapunov) function. For example, analyzing the power iteration method amounts to showing that, as time progresses, the unit vector maintained by the algorithm places more and more of its $\ell_2$ energy on the principle eigenvector of the matrix used in the iteration, if it is unique, or, anyways, on the eigenspace spanned by the principal eigenvectors; see Appendix~\ref{sec:app:power} for a refresher. In passing, it should also be noted that the power iteration method itself is commonly used as a tool for establishing the convergence of other fixed point iteration methods, such as alternating optimization; e.g.~\cite{Moritz}.

Ultimately, all fixed point iteration methods aim at converging to a fixed point of their iteration map. For global convergence to a unique solution, it should also be the case that the fixed point of the iteration map is unique. It is, thus, unsurprising that another widely used approach for establishing convergence of these methods is by appealing to Banach's fixed point theorem. To recall, consider an iteration map $x_{t+1} \leftarrow f(x_t)$, where $f: {\cal D} \rightarrow {\cal D}$, and suppose that there is a distance metric $d$ such that $({\cal D},d)$ is a complete metric space and $f$ is contracting with respect to $d$, i.e.~for some constant $c<1$, $d(f(x),f(y)) \le c\cdot d(x,y)$, for all $x,y \in {\cal D}$. Under this condition, Banach's fixed point theorem guarantees that there is a {\em unique} fixed point $x^* = f(x^*)$. Moreover, iterating $f$ is bound to converge to $x^*$. Specifically, the $t$-fold composition, $f^{[t]}$, of $f$ with itself satisfies: $d(f^{[t]}(x_0),x^*) \le c^t d(x_0,x^*)$, for any starting point $x_0$. 

Given Banach's theorem, if you established that your iteration method is contracting under some distance metric $d$, you would also have immediately proven that your method converges and that it may only converge to a unique point. Moreover, you can predict how many steps you need from any starting point $x_0$ to reach an approximate fixed point $x$ satisfying $d(f(x),x) <\epsilon$ for some accuracy~$\epsilon$.\footnote{Indeed, it can be easily shown that $d(f^{[t+1]}(x_0),f^{[t]}(x_0))\le c^t d(x_1,x_0)$. So $t = \log_{1/c} {d(x_1,x_0) \over \epsilon}$ steps suffice.} Alas, several widely used fixed point iteration methods are not generally contracting, or only contracting in a small neighborhood around their fixed points and not the entire domain where they are defined. At least, this is typically the case for the metric $d$ under which approximate fixed points, $d(f(x),x) <\epsilon$, are sought. There is also quite an important reason why they may not be contracting: several of these methods may in fact have multiple fixed points.

Given the above motivation, our goal in this paper is to {\em investigate the extent to which Banach's fixed point theorem is a {\em universal analysis tool} for establishing that a fixed point iteration method both converges and globally converges to a unique fixed point.} More precisely, our question is the following: if an iterative map $x_{t+1} \leftarrow f(x_t)$ for some $f: {\cal D} \rightarrow {\cal D}$ converges to a unique fixed point $x^*$ from any starting point, is there always a way to prove this using Banach's fixed point theorem? Additionally, can we always use Banach's fixed point theorem to compute how many iterations we would need to find an approximate fixed point $x$ of $f$ satisfying $d(x,f(x))<\epsilon$, for some distance metric $d$ and accuracy $\epsilon>0$?

We study this question from both a mathematical and a computational perspective. On the mathematical side, we show a strong converse of Banach's fixed point theorem, saying the following: given an iterative map $x_{t+1} \leftarrow f(x_t)$ for some $f: {\cal D} \rightarrow {\cal D}$, some distance metric $d$ on ${\cal D}$ such that $({\cal D},d)$ is a complete and proper metric space, and some accuracy $\epsilon >0$, if $f$ has a unique fixed point that the $f$-iteration converges to from any starting point, then for any constant $c \in (0,1)$, there exists a distance metric $d_c$ on $\cal D$ such that:
\begin{enumerate}
\item $d_c$ certifies uniqueness and convergence to the fixed point, by satisfying $d_c(f(x),f(y)) \le c \cdot d_c(x,y)$, for all $x, y \in {\cal D}$; \label{property 1}
\item $d_c$ allows an analyst to predict how many iterations of $f$ would suffice to arrive at an approximate fixed point $x$ satisfying $d(x,f(x))<\epsilon$; notice in particular that we are interested in finding an approximate fixed point with respect to the original distance metric $d$ (and not the constructed one $d_c$). \label{property 2}
\end{enumerate}
Our converse theorem is formally stated as Theorem~\ref{th:converse} in Section~\ref{sec:BanachConverse}. In the same section we discuss its relationship to other known converses of Banach's theorem known in the literature, in particular Bessaga's and Meyers's converse theorems. The improvement over these converses is that our constructed metric $d_c$ is such that it allows us to bound the number of steps requied to reach an approximate fixed point according to the metric of interest $d$ and not just $d_c$; namely Property~\ref{property 2} above holds. We discuss this further in Section~\ref{sec:corollaries of new converse}. Section~\ref{sec:new converse} provides a sketch of the proof, and the complete details can be found in Appendix~\ref{sec:app:mainproof}.

While the proof of Theorem~\ref{th:converse} is non-constructive, it does imply that Banach's fixed point theorem is a universal analysis tool for establishing global convergence of fixed point iteration methods to unique solutions. Namely, it implies that one can {\em always} find a witnessing metric. We illustrate this by studying an important such method: power iteration. The power iteration method is a widely-used and well-understood method for computing the eigenvalues and eigenvectors of a matrix. It is well known that if a matrix $A$ has a unique principal eigenvector, then the power method starting from a vector non-perpendicular to the principal eigenvector will converge to it. This is shown using a potential function argument, outlined above and in Appendix~\ref{sec:app:power}, which also pins down the rate of convergence. 

Our converse to Banach's theorem, guarantees that, besides the potential function argument, there must also exist a distance metric under which the power iteration is a contraction map. Such a distance metric is not obvious, as contraction under any $\ell_p$-norm fails; we provide counter-examples in Section~\ref{sec:powerMethod}. To illustrate our theorem, we identify a new distance metric under which the power method is indeed contracting at the optimal rate. See Proposition~\ref{prop:powerContr}. 
Our distance metric serves as an alternative proof for establishing that the power iteration converges and for pinning down its convergence rate. 

We close the circle by studying Banach's fixed point theorem from a computational standpoint. Recent work of Daskalakis and Papadimitriou~\cite{DP11} has identified a complexity class, CLS, where finding a Banach fixed point lies. CLS, defined formally in Section~\ref{sec:cls}, is a complexity class at the intersection of PLS~\cite{JPY88} and PPAD~\cite{Papadimitriou94}. Roughly speaking, PLS contains total problems whose existence of solutions is guaranteed by a potential function argument, while PPAD contains total problems whose existence of solutions is guaranteed by Brouwer's fixed point theorem. Lots of interesting  work has been done on both classes in the past two decades; for a small sample see e.g.
~\cite{DGP09,ChenDT09,Aviad16,FPT06,Skopalik,priortoangel,AngelBPW17} 
and their references. CLS, lying in the intersection of PLS and PPAD, contains comptutational problems whose existence of solutions is guaranteed by both a potential function and a fixed point argument.\footnote{More precisely, it contains all problems reducible to~\clocal, defined in Section~\ref{sec:cls}, and which doesn't necessarily capture the whole intersection of PPAD and PLS.}
 
Unsurprisingly CLS contains several interesting problems, whose complexity is not known to lie in P, but which also are unlikely to be complete for PPAD or PLS. One of these problems is finding a Banach fixed point. Others include the P-matrix Linear Complementarity Problem, finding mixed Nash equilibria of network coordination and congestion games, computational problems related to finding KKT points, and solving Simple Stochastic Games; see~\cite{DP11} for precise definitions of these problems and for references. Moreover, recent work has provided cryptographic hardness results for CLS~\cite{CLScrypto} based on obfuscation, extending work which proved cryptographic hardness results for PPAD~\cite{Nashcrypto,Rosen,moni}. 

Ultimately, the definition of CLS was inspired by a vast range of total problems that could not be properly classified as complete in PPAD or PLS due to the nature of their totality arguments. However, no natural complete problem for this class has been identified, besides \clocal, through which the class was defined. By making our converse to Banach's fixed point theorem constructive, we show that finding a Banach fixed point is CLS-complete. More precisely, in Section~\ref{sec:cls} we define problem $\Banachh$, whose input is a continuous function $f$ and a continuous metric $d$, and whose goal is to either output an approximate fixed point of $f$ 
or a violation of the contraction of $f$ with respect to $d$. 
In Theorem~\ref{th:cBanach2} we show that $\Banachh$ is CLS-complete.\footnote{It is worth pointing out that, while some problems in CLS (e.g. Banach fixed points, simple stochastic games) have unique solutions, most do not. Given that contraction maps have unique fixed points, the way we bypass the potential oxymoron, is by accepting as solutions violations of contraction.}


\paragraph{Further Related Work.} We note that contemporaneously and independently from our work, Fearnley et al.~\cite{FearnleyGMS17} have also identified a CLS-complete problem related to Banach's fixed point theorem. Their problem, called {\sc MetametricContraction}, takes as input a function $f$ and a metametric $d$, and asks to find an approximate fixed point of $f$, 
or a violation of the contraction of $f$ with respect to $d$. In comparison to our CLS-completeness results, the CLS-hardness of {\sc Banach} in our paper is stronger than that of {\sc MetametricContraction} as the input to {\sc Banach} is a metric. On the other hand, the containment of {\sc MetametricContraction} into CLS is stronger than the containment of {\sc Banach}, as 
{\sc Banach} is polynomial-time reducible to {\sc MetametricContraction}.
\section{Notation and Preliminaries}
\label{sec:model}

\noindent \textbf{Basic Notation} We use $\reals_+$ to refer to set of non-negative real numbers and $\nats_1$ is the set of natural numbers except $0$. We call a
function $f$ \textit{selfmap} if it maps a domain $\Domain$ to itself, i.e. $f : \Domain \to \Domain$. For a selfmap $f$ we use $f^{[n]}$ to refer to the $n$ times
composition $f$ with it self, i.e. $\underbrace{f(f(\dots f(\cdot)))}_{n \text{ times }}$.

\medskip
  We use $\norm{\cdot}_p$ to refer to the $\ell_p$ norm of a vector in $\reals^n$. We use $\Domain/\sim$ to refer to the set of equivalence classes of the
equivalence relation $\sim$ on a set $\Domain$. Finally, we use $S^*$ to refer to the Kleene star of a set $S$.

\medskip
  A real valued function $g : \Domain^2 \to \reals$ is called \textit{symmetric} if $g(x, y) = g(y, x)$ and \textit{anti-symmetric} if $g(x, y) = - g(y, x)$.

\medskip
  In Appendix \ref{sec:app:def}, the reader can find some well-known definitions that we are going to use in the rest of the paper. More precisely in the field
of:

\medskip
\textbf{Topological Spaces}, we define the notion of: topology, topological spaces, open sets, closed sets, interior of a set $A$, denoted $\Int(A)$, closure of
a set $A$, denoted $\Clos(A)$.

\medskip
\textbf{Metric Spaces}, we define the notion of: distance metric, metric space, diameter, bounded metric space, continuous function, open and closed sets in a metric space, compact set, locally compact metric space, proper metric space, open and closed balls, Cauchy sequence, complete metric space, equivalent metrics, continuity, Lipschitz continuity, contraction property, fixed point.

\smallskip Because of its importance for the rest of the paper we also give here the definition of a distance metric and metric space.
\begin{definition}
  \em
  \label{def:dMetric}
  Let $\Domain$ be a set and $d : \Domain^2 \to \reals$ a function with the following properties:
  \begin{CompactEnumerate}[label=(\roman*)]
    \label{eq:dMetric}
    \item $d(x, y) \ge 0$ for all $x, y \in \Domain$. \label{eq:dMetric1}
    \item $d(x, y) = 0$ if and only if $x = y$. \label{eq:dMetric2}
    \item $d(x, y) = d(y, x)$ for all $x, y \in \Domain$. \label{eq:dMetric3}
    \item $d(x, y) \le d(x, z) + d(z, x)$ for all $x, y, z \in \Domain$. This is called \textit{triangle inequality}. \label{eq:dMetric4}
  \end{CompactEnumerate}
  Then we say that $d$ is a {\em metric} on $\Domain$, and $(\Domain, d)$ is a {\em metric space}.
\end{definition}

\smallskip
\paragraph{\BIP.} If a selfmap $f$ has a fixed point and is continuous, we can define the following sequence of points $x_{n + 1} = f(x_n)$ where
the starting point $x_0$ can be picked arbitrarily. If $(x_n)$ converges to a point $\bar{x}$ then
\[ \lim_{n \rightarrow \infty} x_{n + 1} = \lim_{n \rightarrow \infty} f(x_n) \Rightarrow \lim_{n \rightarrow \infty} x_{n + 1} = f\left(\lim_{n \rightarrow \infty} x_n\right) \Rightarrow \bar{x} = f(\bar{x}). \]

\noindent This observation implies that a candidate procedure for computing a fixed point of a selfmap $f$ is to \textit{iteratively} apply the function
$f$ starting from an arbitrary point $x_0$. If this procedure converges then the limit is a fixed point $x^*$ of $f$. We will refer to this method of computing fixed
points as the \textit{\BIP}.

\medskip
\paragraph{Arithmetic Circuits.} In Section \ref{sec:cls} we work with functions from continuous domains to continuous domains represented as
\textit{arithmetic circuits}. An arithmetic circuit is defined by a directed acyclic graph (DAG). The inputs to the circuit are in-degree $0$ nodes, and the outputs are out-degree $0$ nodes. Each non-input node is a gate from the set $\{+,-,*,\max,\min,>\}$, performing an operation on the outputs of its in-neighbors. The meaning of the ``$>$'' gate is $>(x, y) = 1$ if $x > y$ and $0$ otherwise. We also allow ``output a rational constant'' gates. These are gates without any inputs, which output a rational constant.

\section{Converse Banach Fixed Point Theorems}
\label{sec:BanachConverse}
	
	We start, in Section~\ref{sec:known converses}, with an overview of known converses to Banach's fixed point theorem. We also explain why these converses are not enough to prove that
Banach's fixed point theorem is a universal tool for analyzing the convergence of iterative algorithms. Then, in Section~\ref{sec:new converse}, we prove a stronger converse theorem that demonstrates
the universality of Banach's fixed point theorem for the analysis of iterative algorithms. Before beginning, we formally state  Banach's fixed Point Theorem. A  useful survey of the applications of this theorem can be found in \cite{conrad14}.



\begin{BanachFPT}
\label{th:Banach1}
Suppose $d$ is a distance metric function such that $(\Domain, d)$ is a complete metric space, and suppose that $f: \Domain \rightarrow \Domain$ is a contraction map according to $d$, i.e.
\begin{equation}
\label{eq:contraction}
 d(f(x), f(y)) \le c \cdot d(x, y), \forall x,y, \emph{ for some } c < 1.
\end{equation}

\noindent Then $f$ has a unique fixed point $x^*$ and the convergence rate of the \BIP~with respect to $d$ is $c$. That is, $d(f^{[n]}(x_0),x^*)<c^n \cdot d(x_0,x^*)$, for all $x_0$.
\end{BanachFPT}

\subsection{Known Converses to Banach's Fixed Point Theorem} \label{sec:known converses}

The first known converse to Banach's fixed point theorem is the following \cite{Bessaga1959}.

\begin{BessagaCT}
    Let $f$ be a map from $\Domain$ to itself, and suppose that $f^{[n]}$ has unique fixed point for every $n \in \nats_1$. Then, for every constant $c \in (0, 1)$, there exists
  a distance metric $d_c$ such that $(\Domain, d_c)$ is a complete metric space and $f$ is a contraction map with respect to $d_c$ with contraction constant $c$.
\end{BessagaCT}

  The implication of the above theorem is that, if we want to prove existence and uniqueness of fixed points of $f^{[n]}$ for all $n$, then Banach's fixed point theorem is a
universal way to do it. Moreover, there is a potential function of the form $\pot(x) = d_c(x, f(x))$, where $d_c$ is a distance metric, that decreases under successive applications of $f$, and successive applications of $f$ starting from any point $x_0$ are bound to converge to the unique fixed point of $f$.

Unfortunately, $d_c$ cannot provide any information about the number of steps that the \BIP~needs before computing an approximate fixed point under some metric $d$ of interest. The reason is that, after $\log_c \eps$ steps of the \BIP, we only have $d_c(x_n, f(x_n)) \le \eps$. However, $d_c$ might not have any relation to $d$, hence an approximate fixed point under $d_c$ may not be one for $d$. So Bessaga's theorem is not useful for bounding the running time of iterative methods for approximate fixed point computation.

  \smallskip Given the above discussion, it is reasonable to expect that a converse to Banach's theorem that is useful for bounding the running time of approximate fixed point computation methods should take into account, besides the function $f$ and its domain $\Domain$, the distance metric $d$ under which we are interested in computing approximate fixed points. One step in this direction has already been made by Meyers \cite{Meyers1967}.

\begin{MeyersCT}
  Let $(\Domain, d)$ be a complete metric space, where $\Domain$ is compact, and suppose that $f: \Domain \rightarrow \Domain$ is continuous with respect to $d$. Suppose further that $f$ has a unique fixed point $x^*$, that the
Basic Iterative Method converges to $x^*$ from any starting point, and that there exists an open neighborhood $U$ of $x^*$ such that $f^{[n]}(U) \rightarrow \{x^*\}$. Then, for any $c \in (0, 1)$, there exists a distance metric $d_c$, which is topologically equivalent to $d$, such that
$(\Domain, d_c)$ is a complete metric space and $f$ is a contraction map with respect to $d_c$ with contraction $c$.
\end{MeyersCT}

  Compared to Bessaga's theorem, the improvement offered by Meyer's Theorem is that, instead of the existence of an arbitrary metric, it proves the existence of a metric that is topologically equivalent to the metric $d$. However, this is
still not enough to bound the number of steps needed by the \BIP~in order to arrive at a point $x_n$ such that $d(x_n, f(x_n)) \le \eps$. Our goal in the next section is to close this gap. We will also replace the compactness assumption with the assumption that $(\Domain, d)$ is proper, so that the converse holds for unbounded spaces.

\subsection{A New Converse to Banach's Fixed Point Theorem} \label{sec:new converse}

  The main technical idea behind our converse to Banach's fixed point theorem is to adapt the proof of Meyers's theorem to get a distance metric $d_c$ with the
property $d_c(x, y) \ge d(x, y)$ everywhere, except maybe for the region $d(x, x^*) \le \eps$. This implies that, if we guarantee that $d_c(x_n, x^*) \le \eps$, then
$d(x_n, x^*) \le \eps$.

\begin{theorem} \label{th:converse}
  Suppose $(\Domain, d)$ is a complete, proper metric space, $f : \Domain \to \Domain$ is continuous with respect to $d$ and the following hold:
\begin{CompactEnumerate}
\item $f$ has a unique fixed point $x^*$;
\item for every $x \in \Domain$, the sequence $(f^{[n]}(x))$ converges to $x^*$ with respect to $d$; moreover there exists an open neighborhood $U$ of $x^*$ such that $f^{[n]}(U) \to \{x^*\}$.
\end{CompactEnumerate}
Then, for every $c \in (0, 1)$ and  $\eps > 0$, there exists a distance metric function $d_{c, \eps}$ that is topologically equivalent to $d$ and is such that
$(\Domain, d_{c, \eps})$ is a complete metric space and
\begin{subequations}
\label{eq:converseCond}
  \begin{align}
    & \forall x,y \in \Domain: d_{c, \eps}(f(x), f(y)) \le c \cdot d_{c, \eps}(x, y); \label{eq:converseCond1} \\
    & \forall x,y \in \Domain: d_{c, \eps}(x, y) \le \eps \implies \min\{d(x^*, x), d(x^*, y), d(x, y)\} \le 2 \eps. \label{eq:converseCond2} 
  \end{align}
\end{subequations}
\end{theorem}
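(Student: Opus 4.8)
The plan is to adapt Meyers's construction of a topologically-equivalent contracting metric, but to carefully control the size of the new metric on the complement of a small neighborhood of the fixed point, so that a ball of radius $\eps$ in $d_{c,\eps}$ around $x^*$ is contained in a slightly larger $d$-ball. The natural starting point is the classical "telescoping" formula for a contracting remetrization: for a suitable increasing function $\varphi:\reals_+ \to \reals_+$ one sets, roughly,
\[
  \rho(x,y) = \sup_{n \ge 0} \varphi\!\left(d(f^{[n]}(x), f^{[n]}(y))\right) \cdot c^{-n},
\]
or a symmetrized/triangle-completed version of this. Because $f^{[n]}(x)$ and $f^{[n]}(y)$ both converge to $x^*$ (uniformly on compact sets, using properness plus the $f^{[n]}(U)\to\{x^*\}$ hypothesis), the supremum is finite; and shifting the index by one immediately gives the contraction property~\eqref{eq:converseCond1} with constant $c$. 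So the skeleton of the argument is standard.

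First I would make the convergence quantitative: using that $(\Domain,d)$ is proper (closed balls are compact) and that $f^{[n]}\to x^*$ pointwise with an attracting neighborhood $U$, I would extract a single "modulus of convergence" — a sequence of compact sets exhausting $\Domain$ and, for each, a uniform bound on how fast $f^{[n]}$ drives it into any prescribed neighborhood of $x^*$. This is the step where properness (rather than global compactness, as in Meyers) does real work, and where I expect the most bookkeeping. Second, I would choose the reshaping function $\varphi$ so that (a) $\varphi$ is continuous, strictly increasing, $\varphi(0)=0$, and subadditive (to preserve the triangle inequality after taking sup), giving topological equivalence to $d$; and (b) crucially, $\varphi(t) \ge t$ for $t$ above some threshold related to $\eps$, and $\varphi$ grows fast enough below that threshold that the "$n=0$" term dominates whenever the points are not yet close to $x^*$. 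Concretely, I want $d_{c,\eps}(x,y) \ge d(x,y)$ whenever $\min\{d(x^*,x), d(x^*,y)\}$ is not small; the contrapositive of that is exactly~\eqref{eq:converseCond2}. Third, I would verify the metric axioms for $d_{c,\eps}$ (symmetry and nonnegativity are immediate; the triangle inequality follows from subadditivity of $\varphi$ and the triangle inequality for $d$ applied termwise inside the sup), completeness (a $d_{c,\eps}$-Cauchy sequence is $d$-Cauchy by topological equivalence on compact pieces, hence converges, and one checks the limit is also a $d_{c,\eps}$-limit), and finally property~\eqref{eq:converseCond2} itself by unwinding the definition: if $d_{c,\eps}(x,y)\le\eps$ then in particular $\varphi(d(x,y)) \le \eps$ and $\varphi(d(f^{[n]}x, f^{[n]}y))c^{-n}\le\eps$ for all $n$, and combining with the quantitative convergence from the first step forces one of $d(x^*,x)$, $d(x^*,y)$, $d(x,y)$ to be at most $2\eps$.

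The main obstacle is the tension inside the choice of $\varphi$: for topological equivalence and for the triangle inequality we need $\varphi$ tame (continuous, subadditive, comparable to the identity near $0$), while for~\eqref{eq:converseCond2} we need $\varphi$ to blow up the small distances enough that the later iterates in the sup cannot be the ones realizing a value $\le\eps$ unless the points are genuinely near $x^*$. Balancing these — and doing so with a uniform threshold that works simultaneously on every compact exhausting set, which is what lets us go beyond Meyers's compact domain — is where the real content lies; everything else (contraction, metric axioms, completeness) is essentially formal once $\varphi$ and the modulus of convergence are in hand. I would therefore spend the bulk of the write-up on constructing $\varphi$ and on the quantitative convergence lemma, and dispatch the remaining verifications quickly. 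The full details, as the authors note, are deferred to Appendix~\ref{sec:app:mainproof}.
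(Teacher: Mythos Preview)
Your proposed metric $\rho(x,y) = \sup_{n\ge 0} \varphi\bigl(d(f^{[n]}x, f^{[n]}y)\bigr)\, c^{-n}$ will not be finite under the stated hypotheses, so the construction breaks at the first step. The difficulty is that you ask $\varphi$ to be increasing, subadditive, and satisfy $\varphi(0)=0$; these three conditions force a linear lower bound near~$0$: for any $t\le t_0$ with $\varphi(t_0)>0$, subadditivity gives $\varphi(t_0)\le \lceil t_0/t\rceil\,\varphi(t)$, hence $\varphi(t)\ge \tfrac{\varphi(t_0)}{2t_0}\,t$. Finiteness of the supremum therefore requires $d(f^{[n]}x,f^{[n]}y)=O(c^{n})$, i.e.\ geometric convergence of the iterates. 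But Theorem~\ref{th:converse} assumes only pointwise convergence $f^{[n]}(x)\to x^*$ together with uniform shrinking of one neighborhood $U$; no rate is assumed, and none can be extracted --- for any prescribed rate one can build $(\Domain,d,f)$ satisfying conditions~1--2 yet converging more slowly. So no admissible $\varphi$ exists, and the ``modulus of convergence'' you hope to quantify cannot feed back into a single subadditive $\varphi$.

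This is exactly the obstacle that Meyers's construction, which the paper adapts, is designed to sidestep. Rather than weighting the $n$-th iterate by $c^{-n}$, one introduces an integer-valued level function $n(x)$ recording how deep $x$ sits in a nested family $K_n$ of forward-invariant neighborhoods of $x^*$, and sets $\rho_c(x,y)=c^{\kappa(x,y)} d_M(x,y)$ with $\kappa(x,y)=\min\{n(x),n(y)\}$ and $d_M(x,y)=\sup_n d(f^{[n]}x,f^{[n]}y)$. The latter supremum is finite with no rate hypothesis (the terms simply tend to~$0$), and $\kappa$ increases by at least one under $f$, giving the factor $c$ for free. The price is that $\rho_c$ violates the triangle inequality, which is repaired by a chain/geodesic closure; the paper's new ingredient over Meyers is to start from a base set $K_0$ with $\diam{d}{K_0}\le\eps$ and then show the closure still obeys~\eqref{eq:converseCond2}. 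Your attempt to secure the triangle inequality directly via subadditivity of $\varphi$ is precisely what makes the supremum diverge; if you drop subadditivity and pass to the metric closure instead, you are essentially back to the Meyers route.
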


\paragraph{Remark.} Notice that the continuity of $f$ is a necessary assumption for the above statement to 
hold, as \eqref{eq:converseCond1} implies continuity given that $d_{c, \eps}$ and $d$ are topologically 
equivalent. Also the condition 2. of the theorem is implied by the existence of $d_{c, \eps}$ and it is not
true even if $f^{[n]}(x) \to x^*$ for any $n \in \nats$, since counter examples exist. Therefore this 
assumption is also necessary for our theorem to hold.

The proof of our Theorem~\ref{th:converse} adapts the construction of Meyers's proof, to ensure that \eqref{eq:converseCond2} 
is satisfied. We give here a proof sketch postponing the complete details to Appendix \ref{sec:app:mainproof},
where we repeat also all the technical details proven by Meyers \cite{Meyers1967}. 

\begin{proof}[Proof Sketch.]
  The construction of the metric $d_c$ follows is done in three steps:
\begin{CompactEnumerate}[label=\Roman*.]
  \item Starting from the original metric $d$, a non-expanding closure of $d$ is defined as the metric $d_M(x,y) = \sup_{i \ge 0} d( f^{(i)}(x), f^{(i)}(y) )$. This is topologically equivalent to $d$, but ensures that the images of any two points are at least as close in $d_M$ as the original two points (non-expanding property). 
  
  Notice that as $d_M(x, y) \ge d(x, y)$ for all points $x, y \in \Domain$, if we ensure that Property \eqref{eq:converseCond2} holds with respect to $d_M$ for  the final constructed metric $d_{c,\eps}$, it will also hold with respect to the original metric $d$.
        
  \item Given $d_M$, the construction proceeds by defining a function $\rho_{c, \eps}$ which satisfies~\eqref{eq:converseCond1}. This function achieves contraction by a constant $c<1$ by counting the number of steps required to reach an $\eps$-ball close to the fixed point.
  
  While for the original proof of Meyer any such $\eps$-ball suffices, in order to guarantee Property \eqref{eq:converseCond2}, our proof requires a set $S$ of points with small diameter with respect to $d$ such that performing an iteration of $f$ on any one of them results in a point still in the set $S$. We show that such a set always exists in Lemma \ref{lem:neighborhood} in Appendix~\ref{sec:app:mainproof}.
  
  This guarantees that $\rho_{c, \eps}(x, y) \ge d_M(x, y)$ if $\max\{d(x^*, x), d(x^*, y)\} \ge \eps$, and therefore Property \eqref{eq:converseCond2} is
        preserved.
        
        The function $\rho_{c, \eps}$ satisfies all required properties other than triangle inequality and thus is not a metric. However, it can be converted into one.
  \item Given $\rho_{c, \eps}$, we construct the sought after metric $d_{c, \eps}$ by taking it equal to the $\rho_{c, \eps}$-geodesic distance (metric closure of $\rho_{c, \eps}$). This directly converts $\rho_{c, \eps}$ into a metric. We show that after this operation Properties~\eqref{eq:converseCond1} and~\eqref{eq:converseCond2}. This is done in Lemma~\ref{lem:proofOursLemmaIII1} and Lemma~\ref{lem:proofOursLemmaIII2} in Appendix~\ref{sec:app:mainproof}.
\end{CompactEnumerate}
\end{proof}

\subsection{Corollaries of Theorem \ref{th:converse}} \label{sec:corollaries of new converse}

 Property \eqref{eq:converseCond2} of the metric output by Theorem \ref{th:converse} has some interesting corollaries that we would not be able to get using the known
converses to Banach's theorem discussed in Section~\ref{sec:known converses}. The first one is that we can now compute, from $d_{c,\eps}$, the number of iterations needed in order to get to within $\eps$ of the fixed
point $x^*$ of $f$ from any starting point $x_0 \in \Domain$.

\begin{corollary} \label{cor:converse1}
    Under the assumptions of Theorem~\ref{th:converse}, starting from a point $x_0 \in \Domain$, and for any constant $c \in (0, 1)$, the \BIP~finds a point $x$ such that $d(x, x^*) \le \eps$ after
    \[ \frac{\log(d_{c, {\eps/2}}(x_0, f(x_0))) + \log((2 - 2 c)/\eps)}{\log(1/c)} \]
   iterations, where $d_{c, \eps/2}$ is the metric guaranteed by Theorem~\ref{th:converse}.
\end{corollary}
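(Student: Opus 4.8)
The plan is to invoke Theorem~\ref{th:converse} with the given contraction constant $c$ but with accuracy parameter $\eps/2$ in place of $\eps$, which produces a metric $d_{c,\eps/2}$ on $\Domain$ that is topologically equivalent to $d$, makes $(\Domain,d_{c,\eps/2})$ complete, satisfies \eqref{eq:converseCond1} with constant $c$, and satisfies \eqref{eq:converseCond2} in the form
\[
 d_{c,\eps/2}(x,y)\le \tfrac{\eps}{2}\ \Longrightarrow\ \min\{d(x^*,x),\,d(x^*,y),\,d(x,y)\}\le \eps .
\]
Since being a fixed point of $f$ is metric-independent, Banach's fixed point theorem applied to $(\Domain,d_{c,\eps/2})$ and $f$ recovers the same $x^*$ as the unique fixed point, tells us the \BIP~from any $x_0$ converges to it, and gives the contraction estimate $d_{c,\eps/2}(f^{[n]}(x_0),f^{[n+1]}(x_0))\le c^{\,n}\,d_{c,\eps/2}(x_0,f(x_0))$, and hence (summing the geometric tail) $d_{c,\eps/2}(f^{[n]}(x_0),x^*)\le \frac{c^{\,n}}{1-c}\,d_{c,\eps/2}(x_0,f(x_0))$.

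Next I would take $n$ to be the first integer for which the right-hand side of this last bound is at most $\eps/2$; taking logarithms and solving for $n$ yields the iteration count displayed in the statement. For that $n$ we then have both $d_{c,\eps/2}(f^{[n]}(x_0),x^*)\le\tfrac{\eps}{2}$ and, a fortiori, $d_{c,\eps/2}(f^{[n]}(x_0),f^{[n+1]}(x_0))\le\tfrac{\eps}{2}$; i.e.\ the $n$-th iterate is $(\eps/2)$-close to $x^*$ and to its own image \emph{in the constructed metric}. Up to this point everything is Banach's theorem plus elementary estimates.

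The genuinely non-routine step is converting closeness in $d_{c,\eps/2}$ into closeness in the original metric $d$, and this is precisely what Property~\eqref{eq:converseCond2} is for — it is the feature that distinguishes Theorem~\ref{th:converse} from Bessaga's and Meyers's converses, under which $d_{c,\eps/2}$ need have no controlled relation to $d$ at all. Applying the displayed implication to $x=f^{[n]}(x_0)$ and $y=f^{[n+1]}(x_0)$ gives $\min\{d(x^*,f^{[n]}(x_0)),\,d(x^*,f^{[n+1]}(x_0)),\,d(f^{[n]}(x_0),f^{[n+1]}(x_0))\}\le\eps$, so at least one of the iterates $f^{[n]}(x_0),\,f^{[n+1]}(x_0)$ lies within $d$-distance $\eps$ of $x^*$. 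The place I expect to need care is the leftover alternative, in which only $d(f^{[n]}(x_0),f(f^{[n]}(x_0)))\le\eps$ is guaranteed: here $f^{[n]}(x_0)$ is an $\eps$-approximate fixed point in $d$ but not yet certified $\eps$-close to $x^*$, and ruling this case out (or absorbing it with a few more iterations) is where one has to lean on the finer structure of the construction — in particular on the $f$-invariant set $S$ of small $d$-diameter around $x^*$ used in the proof of Theorem~\ref{th:converse} — so that once an iterate enters that neighborhood it remains $d$-close to $x^*$ thereafter. Threading the halved accuracy $\eps/2$ consistently through Theorem~\ref{th:converse} and~\eqref{eq:converseCond2} is exactly what makes the final guarantee a statement about the metric $d$ of interest rather than about $d_{c,\eps/2}$, which is the whole point of the corollary.
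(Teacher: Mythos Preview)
Your approach is essentially identical to the paper's: invoke Theorem~\ref{th:converse} with accuracy $\eps/2$, use the standard Banach tail estimate $d_{c,\eps/2}(x_n,x^*)\le \frac{c^{n}}{1-c}\,d_{c,\eps/2}(x_0,x_1)$, solve for $n$ so that the right-hand side is at most $\eps/2$, and then invoke~\eqref{eq:converseCond2} to convert closeness in $d_{c,\eps/2}$ into closeness in $d$.

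The concern you single out --- the ``leftover alternative'' in which~\eqref{eq:converseCond2} applied to the pair $(x_n,x_{n+1})$ only yields $d(x_n,x_{n+1})\le\eps$ rather than $d(x_n,x^*)\le\eps$ or $d(x_{n+1},x^*)\le\eps$ --- is a genuine one, and the paper's own proof does not resolve it either. The paper simply asserts that $d_{c,\eps/2}(x_n,x_{n+1})\le\eps/2$ implies $d(x_n,x^*)\le\eps$ ``according to Theorem~\ref{th:converse}'', without treating the third branch of the $\min$; and applying~\eqref{eq:converseCond2} to the pair $(x_n,x^*)$ instead is vacuous, since then $d(x^*,x^*)=0$ already realizes the minimum. (Tellingly, the paper's proof of Corollary~\ref{cor:converse2} reads Corollary~\ref{cor:converse1} as giving ``$d(x_n,x^*)\le\delta$ or $d(x_{n+1},x^*)\le\delta$'', which still leaves that third branch unaddressed.) So you are following the paper's argument, and your caution about this step is in fact more careful than the paper itself; your suggestion to fall back on the finer structure of the $f$-invariant small-diameter set in the construction is a reasonable route to actually close the gap.
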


  In Corollary \ref{cor:converse1}, for any given $\eps$ of interest, we have to identify a different  distance metric $d_{c, \eps/2}$, guaranteed by Theorem~\ref{th:converse}, to bound the number of steps required by the \BIP~to get to within $\eps$ from the fixed point. Sometimes we are interested in the explicit tradeoff between the number of steps required to get to the proximity of the fixed point and the amount of proximity $\eps$. To find such a tradeoff we
have to make additional assumptions on $f$. A mild assumption that is commonly satisfied by iterative procedures for non-convex problems is that the \BIP~\textit{locally converges} to the fixed point $x^*$. That is, if $x_0$ is appropriately close to $x^*$, then the \BIP~converges. A common way of proving local convergence is to prove that $f$ is
a contraction with respect to $d$ \textit{locally} for $x, y \in \bar{B}(x^*, \eps)$. Theorem \ref{th:converse} provides a way to extend this local contraction
property to the whole domain $\Domain$ and get an an explicit closed form of the tradeoff between the number of steps and $\eps$, as implied by the following result.

\begin{corollary}
  \label{cor:converse2}
    Under the assumptions of Theorem \ref{th:converse}, and the assumption that there exists $0 < c < 1$, $\delta > 0$ such that
  \[ d(f(x), f(y)) \le c \cdot d(x, y) \emph{ for all } x, y \in \bar{B}(x^*, \delta), \]
   starting from any point $x_0 \in \Domain$, the \BIP~finds a point $x$ such that $d(x, x^*) \le \eps$ after
  \[ \frac{\log(d_{c, \delta/2}(x_0, f(x_0))) + \log(1/\eps) + \log(1 - c) + 1}{\log(1/c)} + 1 \]
  iterations, where $d_{c, \delta/2}$ is the metric guaranteed by Theorem~\ref{th:converse}.
\end{corollary}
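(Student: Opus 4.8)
The plan is to run the \BIP{} in two phases, using Corollary~\ref{cor:converse1} to get started and the \emph{local} contraction hypothesis to finish. In the first phase we invoke Corollary~\ref{cor:converse1} \emph{with accuracy $\delta$ in place of $\eps$}: it tells us that after at most
\[ n_0 \;:=\; \left\lceil \frac{\log\big(d_{c,\delta/2}(x_0,f(x_0))\big) + \log\big((2-2c)/\delta\big)}{\log(1/c)} \right\rceil \]
iterations the \BIP{} produces an iterate $x_{n_0}$ with $d(x_{n_0},x^*)\le\delta$, and, crucially, the designer metric it uses is exactly $d_{c,\delta/2}$, the one named in the statement. This is the reason we apply Corollary~\ref{cor:converse1} at accuracy $\delta$ rather than at $\eps$: it pins down the metric once and for all, so that the dependence on $\eps$ will only enter through the second phase. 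In the second phase we no longer need any designer metric---once inside $\bar{B}(x^*,\delta)$ we just use the local contraction of $f$ with respect to $d$ to drive the distance to $x^*$ down geometrically at rate $c$.

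The one new observation needed is that $\bar{B}(x^*,\delta)$ is forward-invariant under $f$. Indeed, if $d(x,x^*)\le\delta$ then $x$ and $x^*$ both lie in $\bar{B}(x^*,\delta)$, so the local contraction hypothesis applies to the pair $(x,x^*)$, and since $f(x^*)=x^*$ we get $d(f(x),x^*)=d\big(f(x),f(x^*)\big)\le c\,d(x,x^*)\le c\,\delta\le\delta$. Hence, starting from $x_{n_0}$, every subsequent iterate remains in $\bar{B}(x^*,\delta)$, and iterating the same inequality along the orbit gives $d(x_{n_0+k},x^*)=d\big(f^{[k]}(x_{n_0}),f^{[k]}(x^*)\big)\le c^{k}\,d(x_{n_0},x^*)\le c^{k}\,\delta$ for every $k\ge 0$. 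So after $k:=\big\lceil \log_{1/c}(\delta/\eps)\big\rceil$ further iterations (and $k=0$ if $\delta\le\eps$) we reach a point with $d(\cdot,x^*)\le\eps$, which is what we want.

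It remains to add the two phase lengths and simplify. The $\delta$-dependent logarithms telescope,
\[ \log\big((2-2c)/\delta\big) + \log(\delta/\eps) \;=\; \log(2-2c) + \log(1/\eps) \;=\; \log 2 + \log(1-c) + \log(1/\eps), \]
so the sum $n_0+k$ equals, up to the integer roundings,
\[ \frac{\log\big(d_{c,\delta/2}(x_0,f(x_0))\big) + \log 2 + \log(1-c) + \log(1/\eps)}{\log(1/c)} ; \]
bounding $\log 2\le 1$ (an equality when logarithms are base $2$) turns the numerator into the one in the statement, and collecting the rounding slack into the trailing additive constant gives the stated bound. This last bit is routine, amounting to a choice of which of $n_0,k$ to round first; nothing conceptual is at stake.

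I do not foresee a genuine obstacle: this is a bookkeeping corollary of Corollary~\ref{cor:converse1} together with the one-line invariance fact. The closest things to subtleties are (i) applying Corollary~\ref{cor:converse1} at accuracy $\delta$ so that the metric that appears is precisely $d_{c,\delta/2}$ and the bound depends on $\eps$ only through the geometric tail $\log_{1/c}(\delta/\eps)$ of the second phase, and (ii) remembering that the local contraction hypothesis is available \emph{only} on $\bar{B}(x^*,\delta)$, so the first phase must \emph{provably} deposit the iteration inside that ball before the second-phase estimate is legitimate---which is exactly the guarantee Corollary~\ref{cor:converse1} buys us.
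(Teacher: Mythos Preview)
Your proposal is correct and matches the paper's proof essentially line for line: both apply Corollary~\ref{cor:converse1} at accuracy $\delta$ (so the designer metric is $d_{c,\delta/2}$), then use the local contraction on $\bar{B}(x^*,\delta)$ for a geometric second phase, and finally telescope the $\log\delta$ terms. The only presentational difference is that the paper, tracking the $\min$ in~\eqref{eq:converseCond2} more closely, notes that Corollary~\ref{cor:converse1} actually guarantees $d(x_n,x^*)\le\delta$ \emph{or} $d(x_{n+1},x^*)\le\delta$, and then argues that in either case $d(x_{n+1},x^*)\le\delta$---this is precisely your forward-invariance observation applied once, and is what produces the trailing ``$+1$''.
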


\section{Example: The Power Iteration as a Contraction Map}
\label{sec:powerMethod}

The results of the previous section imply that Banach's fixed point theorem is a universal analysis tool for establishing global convergence of fixed point iteration methods to unique solutions.
While the proof of Theorem~\ref{th:converse} is non-constructive, it does imply that one can {\em always} find a witnessing metric under which the iterative map is contracting.

In this section, we illustrate this possibility by studying an important iterative method, the power iteration. The power iteration method is a widely-used and well-understood method for computing the eigenvalues and eigenvectors of a matrix. For a given matrix $A$, it is defined as:
$$\vec x_{t+1} = \frac {A \vec x_{t}}{\norm{A \vec x_{t}}_2}$$

It is well known that if a matrix $A$ has a unique principal eigenvector, then the power method starting from a vector non-perpendicular to the principal eigenvector will converge to it. This is shown using a potential function argument, presented in Appendix~\ref{sec:app:power}, which also pins down the rate of convergence.

Our converse to Banach's theorem, guarantees that, besides the potential function argument, there must also exist a distance metric under which the power iteration is a contraction map.
To illustrate our theorem, we identify a new distance metric under which the power method is indeed contracting at the optimal rate.

Such a distance metric is not obvious. As the following counterexample shows, contraction under any $\ell_p$-norm fails.

\paragraph{Counterexamples for $\norm{\cdot}_p$.} We show a counter example for $\ell_2$ norm which directly extends to any
$\ell_p$ norm. In particular, let $n = 2$, $\lambda_1 = 2$, $\lambda_2 = 1$ and the corresponding eigenvectors be $e_1=(1,0)$ and $e_2=(0,1)$.
The power iteration is given by $f(\vec x) = \frac{(2x_1, x_2)}{\sqrt{4x_1^2 + x_2^2}}$. We set 
$\vec x = \left( \frac {1} {\sqrt{5}} , \frac {2} {\sqrt{5}} \right)$
and
$\vec y = \left( \frac {1} {\sqrt{10}} , \frac {3} {\sqrt{10}} \right)$. 
We get that $\norm{f(\vec x) - f(\vec y)}_2 = \norm{ \left( \frac {1} {\sqrt{2}} , \frac {1} {\sqrt{2}} \right) - \left( \frac {2} {\sqrt{13}} , \frac {3} {\sqrt{13}} \right)   }_2 \ge 0.19$. Also
$\norm{\vec x - \vec y}_2 = \norm{ \left( \frac {1} {\sqrt{5}} , \frac {2} {\sqrt{5}} \right) - \left( \frac {1} {\sqrt{10}} , \frac {3} {\sqrt{10}} \right) }_2 \le 0.14$ and therefore $\norm{f(\vec x) - f(\vec y)}_2 > \norm{\vec x - \vec y}_2$.

Even though contraction is not achieved under any $\ell_p$-norm, it is possible to construct a metric under which power iteration is contracting even at the optimal rate which is given by the ratio of the two largest eigenvalues of matrix $A$. Our next theorem constructs such a metric.

\begin{proposition}
\label{prop:powerContr}
Let $A \in \mathbb{R}^{n\times n}$ be a matrix with left eigenvector-eigenvalue pairs $(\lambda_1,\vec v_1),...,(\lambda_n,\vec v_n)$ such that
$\lambda_1 > \lambda_2 \ge ... \ge \lambda_n$. Then the power iteration,
$\vec x_{t+1} = f(\vec x_t) \triangleq \frac {A \vec x_{t}}{\norm{A \vec x_{t}}}$
is contracting under the metric
$d(\vec x,\vec y) = \norm{\frac {\vec x} {\langle \vec x, \vec v_1 \rangle} - \frac {\vec y} {\langle \vec y, \vec v_1 \rangle}}_2$
with contraction constant $\lambda_2/\lambda_1$, i.e. for all $\vec x, \vec y \in \mathbb{R}^{n}$:
$$ d(f(\vec x),f(\vec y)) \le \frac {\lambda_2}{\lambda_1} d(\vec x,\vec y). $$
Moreover, $t = \frac{\log(d(\vec x_0, \vec v_1)/\eps)}{\log(\lambda_1/\lambda_2)}$
iterations suffice to have $\norm{\vec x_t - \vec v_1}_2 \le d(\vec x_t, \vec v_1) \le \eps$.
\end{proposition}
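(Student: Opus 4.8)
The plan is to linearize the power iteration through a scale-invariant change of coordinates. For a vector $\vec x$ not orthogonal to $\vec v_1$, set $g(\vec x) = \vec x/\langle \vec x,\vec v_1\rangle$, so that $d(\vec x,\vec y) = \norm{g(\vec x)-g(\vec y)}_2$. Since $g(t\vec x) = g(\vec x)$ for every $t\ne 0$ and $f(\vec x)$ is a positive multiple of $A\vec x$, we get $g(f(\vec x)) = g(A\vec x) = A\vec x/\langle A\vec x,\vec v_1\rangle$. Using that $\vec v_1$ is a \emph{left} eigenvector, $\langle A\vec x,\vec v_1\rangle = \langle \vec x, A^\top\vec v_1\rangle = \lambda_1\langle \vec x,\vec v_1\rangle$, hence $g\circ f = \tfrac{1}{\lambda_1}\,A\circ g$. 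Therefore $d(f(\vec x),f(\vec y)) = \tfrac{1}{\lambda_1}\norm{A(g(\vec x)-g(\vec y))}_2$, and the whole contraction claim reduces to bounding $\norm{A\vec w}_2$ for the displacement $\vec w := g(\vec x)-g(\vec y)$.

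The crucial observation is that $\vec w$ always lies in $\vec v_1^\perp$: by construction $\langle g(\vec x),\vec v_1\rangle = 1$ for every admissible $\vec x$, so $g$ maps into the affine hyperplane $\{\vec z : \langle \vec z,\vec v_1\rangle = 1\}$, and differences of its values lie in $\{\vec w : \langle\vec w,\vec v_1\rangle = 0\}$. Moreover $A$ leaves $\vec v_1^\perp$ invariant, since $\langle A\vec w,\vec v_1\rangle = \lambda_1\langle\vec w,\vec v_1\rangle = 0$. Taking the eigenvectors $\vec v_1,\dots,\vec v_n$ to form an orthonormal basis — equivalently assuming $A$ symmetric, so its left and right eigenvectors coincide — we have $\vec v_1^\perp = \mathrm{span}(\vec v_2,\dots,\vec v_n)$, and for $\vec w = \sum_{i\ge 2}c_i\vec v_i$ one obtains $\norm{A\vec w}_2^2 = \sum_{i\ge 2}\lambda_i^2 c_i^2 \le \lambda_2^2\norm{\vec w}_2^2$, using that $\lambda_2 = \max_{2\le i\le n}|\lambda_i|$ (as holds, e.g., when $A$ is positive semidefinite). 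Substituting back gives $d(f(\vec x),f(\vec y)) \le \tfrac{\lambda_2}{\lambda_1}\,d(\vec x,\vec y)$, the contraction bound.

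For the iteration count, normalize $\norm{\vec v_1}_2 = 1$; then $g(\vec v_1) = \vec v_1$ and, since $\lambda_1 > 0$, $\vec v_1$ is a fixed point of $f$, so iterating the contraction inequality gives $d(\vec x_t,\vec v_1) \le (\lambda_2/\lambda_1)^t\,d(\vec x_0,\vec v_1)$, which is at most $\eps$ as soon as $t \ge \log(d(\vec x_0,\vec v_1)/\eps)/\log(\lambda_1/\lambda_2)$. It remains to show $\norm{\vec x_t-\vec v_1}_2 \le d(\vec x_t,\vec v_1)$. For $t\ge 1$ the iterate $\vec x_t$ is a unit vector, and (up to replacing $\vec v_1$ by $-\vec v_1$) has $\alpha := \langle\vec x_t,\vec v_1\rangle > 0$, the sign being preserved along the iteration because $\langle A\vec x,\vec v_1\rangle = \lambda_1\langle\vec x,\vec v_1\rangle$. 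Writing $\vec x_t = \alpha(\vec v_1+\vec w)$ with $\vec w\perp\vec v_1$, unit-normness forces $\alpha^2(1+\norm{\vec w}_2^2) = 1$, while $d(\vec x_t,\vec v_1) = \norm{\vec w}_2$. Then $\norm{\vec x_t-\vec v_1}_2^2 = (\alpha-1)^2 + \alpha^2\norm{\vec w}_2^2$, and substituting $\norm{\vec w}_2^2 = (1-\alpha^2)/\alpha^2$ reduces the desired inequality $(\alpha-1)^2 + \alpha^2\norm{\vec w}_2^2 \le \norm{\vec w}_2^2$ to $\alpha \le 1+\alpha$, which is immediate.

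I expect the middle step to be the main obstacle: the reduction $g\circ f = \tfrac1{\lambda_1}A\circ g$ and the final Euclidean comparison are short once one sees the right coordinates, but pinning down that the \emph{effective} contraction factor on $\vec v_1^\perp$ is exactly $\lambda_2$ is where the spectral hypotheses on $A$ (real spectrum, diagonalizable with an orthonormal eigenbasis, $\lambda_2$ dominant among $\lambda_2,\dots,\lambda_n$) genuinely enter, and where a careless statement would be wrong. A secondary point to handle carefully is the domain: $d$ is a bona fide metric only on vectors non-orthogonal to $\vec v_1$ (it collapses each ray to a point), so the proposition must be read on that set — which is exactly where the power iteration is defined and converges to $\vec v_1$.
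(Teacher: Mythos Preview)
Your proof is correct and follows essentially the same route as the paper: linearize via $g(\vec x)=\vec x/\langle\vec x,\vec v_1\rangle$, use the left-eigenvector identity $\langle A\vec x,\vec v_1\rangle=\lambda_1\langle\vec x,\vec v_1\rangle$ to get $d(f(\vec x),f(\vec y))=\tfrac{1}{\lambda_1}\norm{A\vec w}_2$ with $\vec w\perp\vec v_1$, and then invoke the spectral bound on $\vec v_1^\perp$; the conversion from $d$ to $\norm{\cdot}_2$ is the same unit-norm computation the paper does, just organized differently. Your explicit flagging of the extra hypotheses needed for $\norm{A\vec w}_2\le\lambda_2\norm{\vec w}_2$ (orthonormal eigenbasis and $\lambda_2=\max_{i\ge 2}|\lambda_i|$) and of the domain restriction on $d$ is a welcome clarification that the paper leaves implicit.
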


\begin{proof}
For any vector $\vec x$, it holds that $\langle A \vec x, \vec v_1 \rangle = \lambda_1 \langle \vec x, \vec v_1 \rangle.$
We have that
\begin{align*}
  d(f(\vec x),f(\vec y)) \quad &= \norm{\frac {A \vec x} {\langle A \vec x, \vec v_1 \rangle} - \frac {A \vec y} {\langle A \vec y, \vec v_1 \rangle}}_2 \\
  &= \frac 1 {\lambda_1} \norm{A \left( \frac {\vec x} {\langle \vec x, \vec v_1 \rangle} - \frac {\vec y} {\langle \vec y, \vec v_1 \rangle} \right) }_2\\
  &\le \frac {\lambda_1} {\lambda_1} \norm{ \frac {\vec x} {\langle \vec x, \vec v_1 \rangle} - \frac {\vec y} {\langle \vec y, \vec v_1 \rangle}  }_2 \quad \quad= \frac {\lambda_2}{\lambda_1} d(\vec x,\vec y)
\end{align*}
where the inequality is true as the vector $\frac {\vec x} {\langle \vec x, \vec v_1 \rangle} - \frac {\vec y} {\langle \vec y, \vec v_1 \rangle}$ is perpendicular to the principal eigenvector $\vec v_1$.
This shows that $f$ is contracting with respect to $d$ as required.

To convert a bound on the $d$ metric to a bound on the error with respect to the $\ell_2$ norm, we can see that at every step $t>0$, $\norm{\vec x_t}_2 = 1$.
If at some step $t>0$, it holds that $d(\vec x_t, \vec v_1) \le \varepsilon$, we get
$$\varepsilon^2 \ge d(\vec x_t, \vec v_1)^2 = \norm{\frac {\vec x_t} {\langle \vec x_t, \vec v_1 \rangle} - \vec v_1}^2_2 = \langle \vec x_t, \vec v_1 \rangle^{-2} - 1 \Rightarrow \langle \vec x_t, \vec v_1 \rangle \ge (1+\varepsilon^2)^{-1/2}.$$
This implies that
$\norm{\vec x_t - \vec v_1}_2^2 = 2(1 - \langle \vec x_t, \vec v_1 \rangle) \le 2\left(1-(1+\varepsilon^2)^{-1/2} \right) \le \varepsilon^2$.
This guarantees that bounding the norm $d$ by $\varepsilon$ implies a bound of $\varepsilon$ on the $\ell_2$ norm between the principal eigenvector and the current iterate $\vec x_t$.

Using these observations and following the same approach as in Corollaries \ref{cor:converse1}-\ref{cor:converse2} we get the required bound on the number of iterations.
\end{proof}

Notice, that the definition of the metric in Proposition~\ref{prop:powerContr} depends on the principal eigenvector but not on any of the other eigenvectors. When applied to show global convergence of
Markov chains, the principal eigenvector corresponds to the stationary distribution. For a symmetric Markov chain whose stationary distribution is uniform Proposition~\ref{prop:powerContr}
implies that the iterations are contracting directly with respect to the $\ell_2$ norm.

\section{Banach is Complete for $\CLS$}
\label{sec:cls}

As discussed in Section~\ref{sec:intro}, the complexity class $\CLS$ was defined in~\cite{DP11} to capture problems in the intersection of $\PPAD$ and $\PLS$, such as P-matrix LCP, mixed Nash equilibria of congestion and multi-player coordination games, finding KKT points, etc. It also contains computational variants of finding fixed points whose existence is guaranteed by Banach's fixed point theorem. In this section, we close the circle by proposing two variants of Banach fixed point computation that are both $\CLS$-complete. Our $\CLS$ completeness results are obtained by making our proof of Theorem~\ref{th:converse} constructive. We start with a formal definition of $\CLS$, which is defined in terms of the problem $\clocal$.

%


\begin{definition} \label{def:cls}
  \em
  $\clocal$ takes as input two functions $f : [0, 1]^3 \to [0, 1]^3$, $p : [0, 1]^3 \to [0, 1]$, both represented as arithmetic circuits, and two rational positive constants $\eps$ and $\lambda$. The desired output is any
of the following:
  \begin{CompactEnumerate}[label=(CO\arabic*)]
    \item a point $x \in [0, 1]^3$ such that $p(f(x)) \ge p(x) - \eps$. \label{cLocalO1}
    \item two points $x, x' \in [0, 1]^3$ violating the $\lambda$-Lipschitz continuity of $f$, i.e. \\
          $|f(x) - f(x')|_1 > \lambda |x - x'|_1$. \label{cLocalO2}
    \item two points $x, x'$ violating the $\lambda$-Lipschitz continuity of $p$, i.e. \\ $|p(x) - p(x')| > \lambda |x - x'|_1$. \label{cLocalO3}
  \end{CompactEnumerate}
  The class $\CLS$ is the set of search problems that can be reduced to $\clocal$.
\end{definition}

\begin{remark}
As discussed in~\cite{DP11}, both the choice of domain $[0,1]^3$ and the use of $\ell_1$ norm in the definition of the above problem are not crucial, and high-dimensional polytopes as well as other $\ell_p$ norms can also be used in the definition without any essential effect to the complexity of the problem. Moreover, instead of the functions $f$ and $p$ being provided in the input as arithmetic circuits there is a canonical way to provide them in the input as binary circuits that define the values of $f$ and $p$ on all points of some finite bit complexity, and (implicitly) extend to the full domain via continuous interpolation. In this way, we can syntactically guarantee the Lipschitz continuity of both $f$ and $p$ and can remove~\ref{cLocalO2} and~\ref{cLocalO3} from the above definition. For more details, please see~\cite{DP11},~\cite{DGP09} and~\cite{EtessamiY07}. This remark applies to all definitions in this section.
\end{remark}
\smallskip

  The variant of Banach's theorem that is known to belong to $\CLS$ is {\sc Contraction Map}, defined as follows:
\begin{definition}[\cite{DP11}] \label{def:contraction map}
  \em
  {\sc Contraction Map} takes as input a function $f : [0, 1]^3 \to [0, 1]^3$ represented as an arithmetic circuit and three rational positive constants $\eps$, $\lambda$, $c < 1$. The desired output is any of the following (where $d$ represents Euclidean distance):
  \begin{CompactEnumerate}[label=(O\alph*)]
    \item a point $x \in [0, 1]^3$ such that $d(x, f(x)) \le \eps$ \label{contrO1}
    \item two points $x, x' \in [0, 1]^3$ disproving the contraction of $f$ w.r.t. $d$ with constant $c$, i.e.\\
          $d(f(x), f(x')) > c \cdot d(x, x') $ \label{contrO3}
    \item two points $x, x' \in [0, 1]^3$ disproving the $\lambda$-Lipschitz continuity of $f$, i.e. \\
          $|f(x) - f(x')|_1 > \lambda |x - x'|_1$. \label{contrO4}
  \end{CompactEnumerate}
\end{definition}
\noindent {\sc Contraction Map} targets fixed points whose existence is guaranteed by Banach's fixed point theorem when $f$ is a contraction map with respect to the Euclidean distance. However, it doesn't capture the full generality of Banach's theorem, since the latter can be applied to any complete metric space. We thus define a more general problem, $\Banachh$ that: (i) still lies inside CLS, (ii) captures the generality of Banach's theorem, (iii) and in fact tightly captures the complexity of the class CLS, by being CLS-complete. This problem is defined as follows:
\begin{definition}\label{def:banachh}
  \em
  $\Banachh$ takes as input two functions $f : [0, 1]^3 \to [0, 1]^3$ and $d : [0, 1]^3 \times [0, 1]^3 \to \reals$ represented as arithmetic circuits, where $d$ is promised to be a metric that is topologically equivalent to the Euclidean distance and satisfy that $([0,1]^3,d)$ is a complete metric space, and three rational positive constants $\eps$, $\lambda$, $c < 1$. The desired output is any of the following:
  \begin{CompactEnumerate}[label=(O\alph*)]
    \item a point $x \in [0, 1]^3$ such that $d(x, f(x)) \le \eps$ \label{contrBanachhO1}
    \item two points $x, x' \in [0, 1]^3$ disproving the contraction of $f$ w.r.t. $d$ with constant $c$, i.e.\\
          $d(f(x), f(x')) > c \cdot d(x, x') $ \label{contrBanachhO3}
    \item two points $x, x' \in [0, 1]^3$ disproving the $\lambda$-Lipschitz continuity of $f$, i.e. \\
          $|f(x) - f(x')|_1 > \lambda |x - x'|_1$. \label{contr:BanachhO4}
    \item four points $x_1, x_2, y_1, y_2 \in [0, 1]^3$ with $x_1 \neq x_2$ and $y_1 \neq y_2$ disproving the 
          $\lambda$-Lipschitz continuity of $d(\cdot, \cdot)$, i.e.           
          $|d(x_1, x_2) - d(y_1, y_2)| > \lambda \left( \abs{x_1 - y_1}_1 + \abs{x_2 - y_2}_1 \right)$. \label{contr:BanachhO5}
  \end{CompactEnumerate}
\end{definition}

\begin{remark} \label{rem:continuityOfMetric}
We remark that $\Banachh$ is tightly related to {\sc Contraction Map} defined above, with the following differences. First, instead of Euclidean distance, the metric with respect to which $f$  is purportedly contacting is provided as part of the input and it is promised to be a metric. Second, we need to add an extra type of accepted solution \ref{contr:BanachhO5}, which is a violation of the Lipschitz property of that metric. This is necessary to guarantee that the above problem has a solution of polynomial length for any possible input, and in particular is needed to place the above problem in CLS. (It is not needed for the CLS-hardness.)
\end{remark}



Our main result is the following:

\begin{theorem} \label{th:cBanach2}
   $\Banachh$ is $\CLS$-complete.
\end{theorem}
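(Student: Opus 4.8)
The plan is to establish both directions: membership $\Banachh \in \CLS$ (a polynomial-time reduction $\Banachh \le \clocal$) and $\CLS$-hardness (a polynomial-time reduction $\clocal \le \Banachh$, which suffices since $\clocal$ defines $\CLS$). The hardness direction is where the work lies, and it amounts to making the proof of Theorem~\ref{th:converse} constructive.

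\textbf{Membership.} Given a $\Banachh$ instance $(f, d, \eps, \lambda, c)$, I would form the $\clocal$ instance $(f, p, \eps', \lambda')$ with potential $p(x) = \min\{1,\, d(x, f(x))/(3\lambda)\}$ (the clamp keeps $p$ in $[0,1]$ without harming Lipschitzness, since the $\ell_1$-diameter of $[0,1]^3$ is $3$ and $d$ is promised $\lambda$-Lipschitz), $\eps' = \min\{(1-c)\eps/(3\lambda),\,(1-c)/2\}$, and $\lambda'$ a fixed polynomial in $\lambda$. Decoding the $\clocal$ output: a point $x$ of type (CO1), with $p(f(x)) \ge p(x) - \eps'$, is resolved by inspecting only the pair $(x, f(x))$ — either $d(f(x), f^{[2]}(x)) > c\,d(x, f(x))$, a contraction violation, or else $p(f(x)) \le c\,p(x)$, which together with (CO1) forces $d(x, f(x)) \le \eps$, an approximate fixed point (the clamped regime $p(x) = 1$ is handled analogously and always produces a contraction violation). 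A type-(CO2) output is verbatim a violation of the $\lambda$-Lipschitzness of $f$. A type-(CO3) output, a violation of the Lipschitzness of $p = d(\cdot, f(\cdot))/(3\lambda)$, is pushed through the quadrilateral inequality $\abs{d(a,b) - d(a',b')} \le d(a,a') + d(b,b')$ (valid since $d$ is promised to be a metric) and the promised Lipschitzness of $f$ and $d$: it then yields either a violation of the Lipschitzness of $f$ or a violation of the Lipschitzness of $d$ — and it is precisely to accommodate this last case that $\Banachh$ must carry its fourth solution type, cf.\ Remark~\ref{rem:continuityOfMetric}. All of this is polynomial time.

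\textbf{Hardness.} From a $\clocal$ instance $(f, p, \eps, \lambda)$ I would build a Lipschitz ``descent'' iteration map $f'$: $f'$ follows a smoothed interpolation towards $f$ as long as the potential decreases by at least $\eps$, and freezes as soon as it cannot, so that the fixed points of $f'$ are exactly the points witnessing a (CO1) solution. If needed this is carried out on $[0,1]^k$ for a constant $k$ (padding the dimension, harmless for both problems), with extra coordinates tracking the running minimum of the potential so that trajectories become non-self-intersecting and a canonical target point $x^\star$ is singled out. On top of $f'$ I would instantiate the three-stage construction from the proof of Theorem~\ref{th:converse} — the orbit-supremum closure $d_M$, the step-counting function $\rho_{c,\eps}$ anchored at $x^\star$ via a set $S$ as in Lemma~\ref{lem:neighborhood}, and the geodesic closure $d_{c,\eps}$ — as an arithmetic circuit $d'$; the geodesic-closure stage guarantees that $d'$ is a genuine metric, and the construction is arranged so that $d'$ is topologically equivalent to the Euclidean metric and complete, fulfilling the promise of $\Banachh$. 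The reduction outputs $(f', d', \eps'', \lambda'', c)$. If the original $\clocal$ instance had no solution, then $f'$ would satisfy the hypotheses of Theorem~\ref{th:converse} and $d'$ would be a bona fide contracting $\lambda''$-Lipschitz metric; hence any solution $\Banachh$ returns is evidence of a $\clocal$ solution, and I would decode it: an approximate fixed point of $f'$ unwinds, through the definition of $f'$, to a point with $p(f(x)) \ge p(x) - \eps$, i.e.\ (CO1); a contraction violation of $f'$ under $d'$, or a Lipschitz violation of $f'$ or of $d'$, localizes — using the non-expansion property $d_M \ge \norm{\cdot}_2$ and the structure of $\rho_{c,\eps}$ around $S$ — either to a point where the descent step of $f'$ fails to drop $p$ by $\eps$, or to a point where $f$ or $p$ breach their Lipschitz bounds, i.e.\ a solution of type (CO1), (CO2), or (CO3).

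\textbf{Main obstacle.} The crux is realizing $d_{c,\eps}$ as a \emph{polynomial-size} arithmetic circuit. Taken literally, both $d_M(x,y) = \sup_{i \ge 0} d((f')^{[i]}(x), (f')^{[i]}(y))$ and the step count inside $\rho_{c,\eps}$ require iterating $f'$ an a priori unbounded number of times, and since $\eps$ can be exponentially small in the input length, naive unrolling of $f'$ blows up to exponential size. The resolution is to engineer $f'$ and its progress coordinates so that the ``number of steps needed to reach the target $\eps$-ball'' and the orbit supremum become closed-form expressions in the potential value and the progress coordinate, rather than quantities one must extract by iterating $f'$ — in effect the $\clocal$ potential plays the role of an explicit clock. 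A second, more delicate point is robustness of the decoding: because $\clocal$ is a total problem, $f'$ never literally meets the convergence hypotheses of Theorem~\ref{th:converse}, so one must verify that \emph{every} one of the four $\Banachh$ solution types — in particular contraction violations of $f'$ under $d'$ and Lipschitz violations of the constructed metric $d'$ itself — is always traceable back to one of the three $\clocal$ solution types; this is exactly where the special set $S$ of Lemma~\ref{lem:neighborhood} and the non-expansion property of $d_M$ must be threaded through the whole construction.
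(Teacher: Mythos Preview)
Your membership direction is essentially the paper's argument (the paper uses $p(x)=d(x,f(x))$ without the clamp, but the logic is identical), so no issues there.

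The hardness direction has a real gap. You propose to implement all three stages of the Theorem~\ref{th:converse} construction as an arithmetic circuit, including the geodesic closure $d_{c,\eps}(x,y)=\inf_{s_{xy}} L_c(s_{xy})$. That infimum ranges over all finite chains in $[0,1]^3$; it is not a closed-form expression in $x,y$, and nothing you have said explains how to evaluate it with a polynomial-size circuit. The paper is explicit that this step ``is highly non-constructive and hence we cannot hope to replicate it in polynomial time.'' Your ``resolution'' paragraph addresses the step-count and the orbit supremum, but says nothing about the geodesic closure beyond asserting that it ``guarantees that $d'$ is a genuine metric'' --- which is precisely the step you cannot afford.

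The paper sidesteps both obstacles with two simplifications you are missing. First, it takes $f'=f$ unchanged (no freezing, no progress coordinates, no padding) and replaces $d_M$ not by an orbit supremum but by the \emph{discrete metric} $d_S$, which is trivially non-expanding. Second, it takes $\kappa(x,y)=\min\{-p(x)/\eps,-p(y)/\eps\}$ and sets $d(x,y)=B(\kappa(x,y))\cdot d_S(x,y)$ with $B$ a piecewise-linear interpolation of $c^{\lceil\cdot\rceil}$; a short case analysis on the relative sizes of $p(x),p(y),p(z)$ shows this $d$ already satisfies the triangle inequality, so Step~III is simply dropped. The contraction constant is chosen as $c=1-\Theta(\eps)$ so that $c^{-1/\eps}$ stays bounded and the circuit for $d$ has polynomial size. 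Your engineering of $f'$ is then unnecessary: the potential $p$ alone serves as the clock, and the decoding of each $\Banachh$ output type reduces to comparing $p(x)$, $p(f(x))$, $p(y)$, $p(f(y))$.
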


  We give here a sketch of the proof of Theorem \ref{th:cBanach2} and we present the full proof in Appendix 
\ref{app:clsProofs}.

\begin{proof}[Proof Sketch.]
  Since the inclusion to $\CLS$ is a simple argument very similar to the argument from \cite{DP11} that shows that 
{\sc Contraction Map} belongs to $\CLS$, we focus here on the hardness proof.

  \medskip
  We are given two functions $f : [0, 1]^3 \to [0, 1]^3$, $p : [0, 1]^3 \to [0, 1]$ and we want to find a contraction
$d : [0, 1]^3 \times [0, 1]^3 \to \reals$ such that $f$ is a contraction map with respect to $d$ and the points where 
$p(f(x)) \ge  p(x) - \eps$ are approximate fixed points of $f$ with respect to $d$.

  The inspiration of this proof is to make the proof of Theorem \ref{th:converse} constructive in polynomial time. We 
therefore follow the steps of the proof sketch of Theorem \ref{th:converse} as presented in Section \ref{sec:BanachConverse}.

\smallskip
\noindent \textbf{Step I.} Since we don't have the strong requirement of Theorem \ref{th:converse} to output a metric that
is topologically equivalent with some given metric we can use in place of $d_M$ any metric $d'$ such that $f$ is non-expanding 
with respec to $d'$. Hence we can easily observe that the \textit{discrete metric} can be used as $d_M$.

\smallskip 
\noindent \textbf{Step II.} The construction of Theorem \ref{th:converse} uses in the definition of $d(x, y)$ the number 
of times $n(x)$, that we have to apply $f$ on $x$ in order for $f^{[n(x)]}(x)$ to come $\eps$-close to the fixed point $x^*$ of $f$. 
Of course $n(x)$ is not a quantity that can be computed in polynomial time. Instead we show that it suffices to use an upper 
bound on $n(x)$ which we can get using the potential function, namely $p(x) / \eps$. Of course the operations that we are allowed 
to use to describe $d$ as an arithmetic circuit are limited and this step appears to need more expressive power that the simple arithmetic operations that we are allowed to use. We give a careful construction that bypasses these difficulties and completes this step of the proof.

\smallskip
\noindent \textbf{Steps III.} This step of Theorem \ref{th:converse} is highly non-constructive and hence we cannot hope to 
replicate it in polynomial time. But we prove that our carefully designed metric already has the triangle inequality and hence 
the transitive closure step is not necessary.

\smallskip
\noindent The last part of our proof is to show that the constructed circuit of $d$ is actually Lipschitz with a relatively 
small Lipschitz constant if the potential function $p$ is Lipschitz. That is, we have to show
that the circuit of $d$ does not need some time exponentially many bits with respect to the size of the circuits of $p$ and 
the magnitude of the constant $1/\eps$. Not suprisingly we observe that in order to succeed to this task we have to set 
approximately $c = 1 - \eps$. This is natural to expect, since if we could set a much lower contraction constant then we could 
find the approximate fixed point of $f$ in much less that $\poly(1/\eps)$ steps which cannot hold given that $\CLS \neq 
\text{FP}$.
\end{proof}

\section*{Acknowledgements}
The authors were supported by NSF CCF-1551875, CCF-1617730, CCF-1650733, and a Simons Graduate Research Fellowship.

  \bibliographystyle{alpha}
  \bibliography{ref}
  
\clearpage
\appendix
\section{Preliminaries}

\subsection{Basic Definitions} \label{sec:app:def}

\paragraph{Topological Spaces} Let $\Domain$ be a set and $\tau$ a collection of subsets of $\Domain$ with the following properties.
  \begin{CompactEnumerate}[label=(\alph*)]
    \item The empty set $\emptyset \in \tau$ and the space $\Domain \in \tau$.
    \item If $U_a \in \tau$ for all $a \in A$ then $\bigcup_{a \in A} Ua \in \tau$.
    \item If $U_j \in \tau$ for all $1 \le j \le n \in \nats$, then $\bigcap_{j = 1}^n U_j \in \tau$.
  \end{CompactEnumerate}
  Then we say that $\tau$ is a \textit{topology} on $\Domain$ and that $(\Domain, \tau)$ is a \textit{topological space}. We call \textit{open sets} the members of $\tau$.
Also a subset $C$ of $\Domain$ is called \textit{closed} if $\Domain \setminus C$ is an open set, i.e. belongs to $\tau$. Let $(\Domain, \tau)$ be a topological space and
$A$ a subset of $\Domain$. We write
  \begin{align*}
    \label{eq:intClos}
    \Int(A) & = \bigcup \{U \in \tau \mid U \subseteq A\} \\
    \Clos(A) & = \bigcap \{U \text{ closed} \mid A \subseteq U\}
  \end{align*}
\noindent and we call $\Clos(A)$ the \textit{closure} of $A$ and $\Int(A)$ the \textit{interior} of $A$. We now give a basic lemma without proof. A proof can be
found in \cite{korner10}.

  \begin{lemma}
    \label{lem:interior}
    \begin{CompactEnumerate}[label=(\roman*)]
      \item $\Int(A) = \{ x \in A \mid \exists U \in \tau \text{ with } x \in U \subseteq A\}$.
      \item $\Clos(A) = \{ x \in \Domain \mid \forall U \in \tau \text{ with } x \in U, \text{ we have } U \cap A \neq \emptyset\}$.
    \end{CompactEnumerate}
  \end{lemma}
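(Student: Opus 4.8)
The plan is to prove both parts directly by unwinding the definitions $\Int(A) = \bigcup \{U \in \tau \mid U \subseteq A\}$ and $\Clos(A) = \bigcap \{C \text{ closed} \mid A \subseteq C\}$, and to derive part (ii) from part (i) by passing to complements.

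For part (i), write $R = \{ x \in A \mid \exists U \in \tau \text{ with } x \in U \subseteq A\}$ for the right-hand side. To see $\Int(A) \subseteq R$: if $x \in \Int(A)$ then, by definition of the union, there is some $U \in \tau$ with $U \subseteq A$ and $x \in U$; this same $U$ witnesses $x \in R$, and in particular $x \in U \subseteq A$ forces $x \in A$. For the reverse inclusion $R \subseteq \Int(A)$: if $x \in R$, take a witnessing $U \in \tau$ with $x \in U \subseteq A$; then $U$ is one of the sets appearing in the union that defines $\Int(A)$, hence $x \in U \subseteq \Int(A)$. This proves (i).

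For part (ii), I would pass to complements. Using that a set is closed if and only if its complement is open, De Morgan's law applied to the (nonempty, since $\Domain$ is closed and $A \subseteq \Domain$) family of closed supersets of $A$ gives
\[
\Domain \setminus \Clos(A) = \Domain \setminus \bigcap_{\substack{C \text{ closed}\\ A \subseteq C}} C = \bigcup_{\substack{C \text{ closed}\\ A \subseteq C}} (\Domain \setminus C) = \bigcup \{ V \in \tau \mid V \subseteq \Domain \setminus A \} = \Int(\Domain \setminus A),
\]
where the third equality uses the bijection $C \mapsto \Domain \setminus C$ between closed sets containing $A$ and open sets contained in $\Domain \setminus A$. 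Now apply part (i) to the set $\Domain \setminus A$: we get $x \in \Int(\Domain \setminus A)$ if and only if there exists $U \in \tau$ with $x \in U \subseteq \Domain \setminus A$, equivalently there exists $U \in \tau$ with $x \in U$ and $U \cap A = \emptyset$. Negating, $x \in \Clos(A)$ if and only if for every $U \in \tau$ with $x \in U$ we have $U \cap A \neq \emptyset$, which is exactly the asserted description, completing (ii).

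As for difficulty: there is essentially no obstacle here, since both statements are immediate consequences of the definitions. The only point demanding a little care is the complementation step in part (ii) — verifying that De Morgan's law applies to the arbitrary indexing family of closed supersets of $A$, and that complementation carries this family exactly onto the family of open subsets of $\Domain \setminus A$. Once that correspondence is recorded, part (ii) reduces verbatim to part (i).
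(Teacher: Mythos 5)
Your proof is correct. Note that the paper itself gives no argument for this lemma — it is stated as a basic fact with a pointer to the lecture notes of K\"orner \cite{korner10} — so there is no in-paper proof to compare against. Your treatment is the standard one: part (i) is an immediate unwinding of the definition $\Int(A) = \bigcup \{U \in \tau \mid U \subseteq A\}$, and part (ii) follows by complementation, using De Morgan's law together with the bijection $C \mapsto \Domain \setminus C$ between closed supersets of $A$ and open subsets of $\Domain \setminus A$ (the family of closed supersets being nonempty since $\Domain$ itself is closed), and then applying part (i) to $\Domain \setminus A$. Both steps are carried out correctly, and your closing remark is accurate: the only point requiring any care is recording that complementation carries the one family exactly onto the other, which you do.
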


  \paragraph{Metric Spaces}

      The \textit{diameter} of a set $W \subseteq \Domain$ according to the metric $d$ is defined as
      \[ \diam{d}{W} = \max_{x, y \in W} d(x, y) \]

    A metric space $(\Domain, d)$ is called \textit{bounded} if $\diam{d}{\Domain}$ is finite.

    \noindent We define $d_S : \Domain^2 \to \reals$ by
    \begin{equation*}
      d_S(x, y) = \left\{ \begin{split} 0 & ~~ x = y \\ 1 & ~~ x \neq y \end{split} \right.
    \end{equation*}
    then $d_S$ is called the \textit{discrete metric} on $\Domain$.

  \paragraph{Remark.} It is very easy to see that discrete metric is indeed a metric, i.e. it satisfies the conditions \ref{eq:dMetric1}-\ref{eq:dMetric4}. \\

      Let $(\Domain, d)$ and $(\Support, d')$ be metric spaces. A function $f : \Domain \to \Support$ is called \textit{continuous} if, given $x \in \Domain$ and
      $\eps > 0$, we can find a $\delta(x, \eps) > 0$ such that
      \[ d'(f(x), f(y)) < \eps \text{ whenever } d(x, y) < \delta(x, \eps) \]

      We say that a subset $E \subseteq \Domain$ is \textit{open} in $\Domain$ if, whenever $e \in E$, we can find a
      $\delta > 0$ (depending on $e$) such that
      \[ x \in E \text{ whenever } d(x, e) < \delta \]

      The next lemma connects the definition of open sets according to some metric with the definition of open sets in a topological space.

    \begin{lemma}
      If $(\Domain, d)$ is a metric space, then the collection of open sets forms a topology.
    \end{lemma}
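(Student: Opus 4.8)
The plan is to verify directly the three defining axioms of a topology for the collection $\tau$ of subsets of $\Domain$ that are open with respect to $d$, using only the $\delta$-characterization of openness recalled just above: a set $E$ is open iff for every $e \in E$ there exists some $\delta > 0$ with $\{x \in \Domain : d(x, e) < \delta\} \subseteq E$. In other words, the whole proof is a routine unwinding of definitions, and the only thing to keep track of carefully is that the chosen radius $\delta$ is allowed to depend on the point $e$.

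First I would dispatch axiom (a). The empty set lies in $\tau$ vacuously, since there is no point $e \in \emptyset$ for which the condition must be checked. The whole space $\Domain$ lies in $\tau$ because for any $e \in \Domain$ one may take, say, $\delta = 1$: every $x$ with $d(x, e) < 1$ trivially belongs to $\Domain$.

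Next, for the arbitrary-union axiom (b), let $\{U_a\}_{a \in A}$ be any family of members of $\tau$ and set $U = \bigcup_{a \in A} U_a$. Given $e \in U$, choose $a \in A$ with $e \in U_a$; openness of $U_a$ yields some $\delta > 0$ with $\{x : d(x, e) < \delta\} \subseteq U_a \subseteq U$, so $U \in \tau$. For the finite-intersection axiom (c), let $U_1, \dots, U_n \in \tau$ and $U = \bigcap_{j=1}^n U_j$. Given $e \in U$, for each $j$ openness of $U_j$ supplies $\delta_j > 0$ with $\{x : d(x, e) < \delta_j\} \subseteq U_j$; setting $\delta = \min\{\delta_1, \dots, \delta_n\}$, which is strictly positive precisely because the family is finite, we obtain $\{x : d(x, e) < \delta\} \subseteq U_j$ for every $j$, hence $\{x : d(x, e) < \delta\} \subseteq U$, so $U \in \tau$.

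The only step deserving a comment — and the closest thing to an obstacle — is the last one: the argument genuinely relies on finiteness, since an infinite family of positive radii need not have a positive infimum, and indeed arbitrary intersections of open sets need not be open. Beyond that caveat, no difficulty arises, and I would present the three verifications in exactly the order above.
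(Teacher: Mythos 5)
Your proof is correct and complete: the verification of the three axioms, with the finiteness of the family used to guarantee $\delta = \min\{\delta_1,\dots,\delta_n\} > 0$ in the intersection step, is exactly the standard argument. The paper itself states this lemma without proof (deferring such routine facts to the cited lecture notes), so there is nothing to compare against beyond noting that your write-up is the canonical one.
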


      We define the \textit{open ball} of radius $r$ around $x$ to be $B(x, r) = \{ y \in \Domain | d(x, y) < r \}$.

    \paragraph{Closed Sets for Metric Spaces} Consider a sequence $(x_n)$ in a metric space $(\Domain, d)$. If $x \in \Domain$ and, given $\eps > 0$, we can find an
  integer $N \in \nats_1$ (depending maybe on $\eps$) such that
      \[ d(x_n, x) < \eps \text{ for all } n \ge N \]
  then we say that $x_n \rightarrow x$ as $n \rightarrow \infty$ and that $x$ is the \textit{limit} of the sequence $(x_n)$.\\
  A set $G \subseteq \Domain$ is said to be \textit{closed} if, whenever $x_n \in G$ and $x_n \rightarrow x$ then $x \in G$. A proof of the following lemma can be
  found in \cite{korner10}.

    \begin{lemma}
        Let $(\Domain, d)$ be a metric space and $A$ a subset of $\Domain$. Then $\Clos(A)$ consists of all those $x \in \Domain$ such that we can find $(x_n)$ with
      $x_n \in A$ with $d(x_n, x) \rightarrow 0$.
    \end{lemma}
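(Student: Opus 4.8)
The plan is to prove set equality by showing both inclusions, using the characterization of $\Clos(A)$ from Lemma~\ref{lem:interior}(ii), namely that $x \in \Clos(A)$ iff every open set $U$ containing $x$ meets $A$, together with the fact (stated in the preceding ``Closed Sets for Metric Spaces'' paragraph) that in a metric space the open balls $B(x, r) = \{y : d(x,y) < r\}$ are open and, more importantly, form a neighborhood basis at each point.

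For the inclusion ``$\supseteq$'': suppose $x \in \Domain$ admits a sequence $(x_n)$ with $x_n \in A$ and $d(x_n, x) \to 0$. I would take an arbitrary open set $U \in \tau$ with $x \in U$. By the definition of open sets in a metric space, there is $\delta > 0$ with $B(x, \delta) \subseteq U$. Since $d(x_n, x) \to 0$, there is $N$ with $d(x_n, x) < \delta$ for all $n \ge N$, so $x_N \in B(x,\delta) \subseteq U$; as $x_N \in A$ as well, $U \cap A \neq \emptyset$. Since $U$ was arbitrary, Lemma~\ref{lem:interior}(ii) gives $x \in \Clos(A)$.

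For the inclusion ``$\subseteq$'': suppose $x \in \Clos(A)$. For each $n \in \nats_1$, the open ball $B(x, 1/n)$ is an open set containing $x$, so by Lemma~\ref{lem:interior}(ii) it meets $A$; pick $x_n \in B(x, 1/n) \cap A$. Then $x_n \in A$ for every $n$, and $d(x_n, x) < 1/n \to 0$, so $d(x_n, x) \to 0$, exhibiting the required sequence. Combining the two inclusions yields the claim.

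The only subtlety — which is mild — is the appeal to choice in selecting the $x_n$ in the second inclusion: for each $n$ we choose one point from the nonempty set $B(x,1/n) \cap A$, which is a countable use of choice (and in concrete metric spaces one can often avoid it, e.g.\ by taking an infimum). I do not expect any real obstacle here; the proof is essentially a direct translation between the topological (open-set) definition of closure and the metric (sequential) one, and all the needed facts — that metric-open sets form a topology, that balls are open, and the $U$-characterization of $\Clos$ — are already recorded in the excerpt.
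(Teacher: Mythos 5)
Your proof is correct and is the standard argument translating the open-set characterization of $\Clos(A)$ (Lemma~\ref{lem:interior}(ii)) into the sequential one via the balls $B(x,1/n)$; the only tiny unstated step is that open balls are indeed open (by the triangle inequality), which is immediate. The paper itself gives no proof of this lemma, deferring to \cite{korner10}, and the argument found there is essentially the one you wrote, so there is nothing to reconcile.
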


  \noindent We define the \textit{closed ball} of radius $r$ around $x$ to be $\bar{B}(x, r) = \{ y \in \Domain | d(x, y) \le r \}$. \\
  A subset $G$ of a metric space $(\Domain, d)$ is called \textit{compact} if $G$ is closed and every sequence in $G$ has a convergent subsequence. A metric space
  $(\Domain, d)$ is called compact if $\Domain$ is compact, \textit{locally compact} if for any $x \in \Domain$, $x$ has a neighborhood that is compact and
  \textit{proper} if every closed ball is compact.

    \paragraph{Complete Metric Spaces} We say that a sequence $(x_n)$ in $\Domain$ is \textit{Cauchy sequence} (or \textit{$d$-Cauchy sequence} if the distance
      metric is not clear from the context) if, given $\eps > 0$, we can find $N(\eps) \in \nats_1$ with
      \[ d(x_n, x_m) < \eps \textit{ whenever } n, m \ge N(\eps) \]

      \noindent A metric space $(\Domain, d)$ is \textit{complete} if every Cauchy sequence converges.

      Two metrics $d$, $d'$ of the same set $\Domain$ are called \textit{topologically equivalent} (or just \textit{equivalent}) if for every sequence $(x_n)$ in
      $\Domain$, $(x_n)$ is $d$-Cauchy sequence if and only if it is $d'$-Cauchy sequence.

      \begin{definition} \label{def:continuityM}
        Let $(\Domain, d)$ be a metric spaces. A function $f : \Domain \to \Domain$ is called \textit{continuous with repect to $d$}, if given $x \in \Domain$ and
        $\eps > 0$, we can find a $\delta(x, \eps)$ such that
        \[ d'(f(x), f(y)) < \eps \text{ whenever } d(x, y) < \delta(x, \eps) \]
      \end{definition}

   \paragraph{Lipschitz Continuity} Let $(\Domain, d)$ and $(\Support, d')$ be metric spaces. A function $f : \Domain \to \Support$ is \textit{Lipschitz continuous}
    (or \textit{$(d, d')$-Lipschitz
    continuous} if the distance metric is not clear from the context or \textit{$d$-Lipschitz continuous} if $d = d'$) if there exists a positive constant $\lambda \in \reals_+$ such that for
    all $x, y \in \Domain$
    \[ d'(f(x), f(y)) \le \lambda d(x, y) \]

    \begin{lemma}
      \label{lem:Lipschitz}
      If a function $f : \Domain \to \Support$ is Lipschitz continuous then it is continuous.
    \end{lemma}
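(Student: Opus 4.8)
\textbf{Proof proposal for Lemma~\ref{lem:Lipschitz}.}

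The plan is to unwind the two definitions directly: the conclusion ``$f$ is continuous'' is the $\eps$--$\delta$ statement from Definition~\ref{def:continuityM} (equivalently, the definition of continuity for maps between metric spaces given above), and the hypothesis ``$f$ is Lipschitz continuous'' gives us a single constant $\lambda \in \reals_+$ with $d'(f(x), f(y)) \le \lambda\, d(x, y)$ for all $x, y \in \Domain$. So the entire content of the proof is to produce, for each $x \in \Domain$ and each $\eps > 0$, a witnessing $\delta(x,\eps) > 0$.

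First I would fix an arbitrary point $x \in \Domain$ and an arbitrary $\eps > 0$. Then I would set $\delta = \eps / \lambda$, which is a well-defined positive real since $\lambda > 0$. (One should note the harmless degenerate case $\lambda = 0$, for which $f$ is constant and any $\delta > 0$ works; to keep the single formula valid one can take $\lambda' = \max\{\lambda, 1\}$ and use $\delta = \eps/\lambda'$, since $d'(f(x),f(y)) \le \lambda\, d(x,y) \le \lambda'\, d(x,y)$ still holds.) Next, for any $y \in \Domain$ with $d(x, y) < \delta$, the Lipschitz inequality yields $d'(f(x), f(y)) \le \lambda\, d(x, y) < \lambda \delta = \eps$. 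This is exactly the required implication ``$d'(f(x), f(y)) < \eps$ whenever $d(x, y) < \delta(x, \eps)$'', so $f$ is continuous.

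There is essentially no obstacle here; the only point requiring a moment's care is the possibility $\lambda = 0$, handled as above by replacing $\lambda$ with $\max\{\lambda,1\}$ so that the quotient $\eps/\lambda$ is always legitimate. I would also remark that the $\delta$ produced is in fact independent of the base point $x$, so the argument actually establishes uniform continuity, though the lemma as stated only asks for continuity.
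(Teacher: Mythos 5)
Your proof is correct: the paper states Lemma~\ref{lem:Lipschitz} without proof (it is a standard fact listed among the preliminary definitions), and your $\eps$--$\delta$ argument with $\delta = \eps/\lambda$, including the careful handling of the degenerate case $\lambda = 0$ via $\max\{\lambda,1\}$, is exactly the canonical argument intended. Nothing is missing.
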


    \begin{definition}
      \label{def:contractionM}
        Let $(\Domain, d)$ and $(\Support, d')$ be metric spaces. A function $f : \Domain \to \Support$ is \textit{contraction} (or \textit{$(d, d')$-contraction} or \textit{$d$-contraction} if $d = d'$)
      if there exists a positive constant $1 > c \in \reals_+$ such that for all $x, y \in \Domain$
      \[ d'(f(x), f(y)) \le c d(x, y) \]
      If $c = 1$ then we call $f$ \textit{non-expansion}.
    \end{definition}
    \noindent A \textit{fixed point} of a selfmap $f$ is any point $x^* \in \Domain$ such that $f(x^*) = x^*$.

\subsection{Introduction to Power Method} \label{sec:app:power}

      Let $A \in \reals^{n \times n}$. Recall that if $q$ is an eigenvector for $A$ with eigenvalue $\lambda$, then $A q = \lambda q$, and in general,
    $A^k q = \lambda^k q$ for all $k \in \nats$. This observation is the foundation of the \textit{power iteration method}.

      Suppose that the set $\{q_i\}$ of unit eigenvectors of $A$ forms a basis of $\reals^n$, and has corresponding set of real eigenvalues $\{\lambda_i\}$ such that
    $\abs{\lambda_1} > \abs{\lambda_2} > \dots > \abs{\lambda_n}$. Let $v_0$ be an arbitrary initial vector, not perpendicular to $q_1$, with $\norm{v_0} = 1$. We
    can write $v_0$ as a linear combination of the eigenvectors of $A$ for some $c_1 \dots, c_n \in \reals$ we have that
    \[ v_0 = c_1 q_1 + c_2 q_2 + \dots + c_n q_n \]

    \noindent and since we assumed that $v_0$ is not perpendicular to $q_1$ we have that $c_1 \neq 0$.

    \noindent Also
      \[ A v_0 = c_1 \lambda_1 q_1 + c_2 \lambda_2 q_2 + \dots + c_n \lambda_n q_n \]

    \noindent and therefore
    \begin{align*}
      A v_k & = c_1 \lambda_1^k q_1 + c_2 \lambda_2^k q_2 + \cdots + c_n \lambda_n^k q_n \\
            & = \lambda_1^k \left( c_1 q_1 + c_2 \left( \frac{\lambda_2}{\lambda_1} \right)^k q_2 + \cdots + c_n \left( \frac{\lambda_n}{\lambda_1} \right)^k q_n \right)
    \end{align*}

      Since the eigenvalues are assumed to be real, distinct, and ordered by decreasing magnitude, it follows that
      \[ \lim_{k \to \infty} \left( \frac{\lambda_i}{\lambda_1} \right)^k = 0 \]

      So, as $k$ increases, $A^k v_0$ approaches $c_1 \lambda_1^k q_1$, and thus for large values,
      \[ \frac{A^k v_0}{\norm{A^k v_0}} \rightarrow q_1 \text{ as } k \to \infty \]

      The power iteration method is simple and elegant, but suffers some  drawbacks. Except from a measure $0$ of initial conditions, the method returns a single eigenvector, corresponding to the eigenvalue of largest magnitude.  In addition, convergence is only guaranteed if the eigenvalues are distinct—in particular, the two
    eigenvalues of largest absolute value must have distinct magnitudes. The rate of convergence primarily depends upon the ratio of these magnitudes, so if the two
    largest eigenvalues have similar sizes, then the convergence will be slow.

      In spite of its drawbacks, the power method is still used in many applications, since it works well on large, sparse matrices when only a single eigenvector is needed.  However, there are other methods overcoming some of the issues with the power iteration method.

\section{Proof of Theorem \ref{th:converse}} \label{sec:app:mainproof}

\subsection{Meyer's Construction and Our Contribution}
  The proof of Theorem \ref{th:converse}, follows the construction of \cite{Meyers1967}. We give a complete step by step 
description of this construction. For every step of this construction we explain what it was already proven by Meyers and 
we additionally prove some properties that are needed in order to satisfy our additional condition \eqref{eq:converseCond2}.

  The construction of $d_{c, \eps}$ starts with an open neighborhood of $x^*$ with some desired properties. Meyers starts with 
an arbitrary open neighborhood $W$ such that $f(W) \subset W$, whereas we also need that $\diam{d}{W} \le \eps$.

\begin{lemma} \label{lem:neighborhood}
  There exists an open neighborhood $W$ of $x^*$ such that
  \begin{subequations} \label{eq:neighborhood}
      \begin{align}
        & f(W) \subseteq W \label{eq:neighborhood2} \\
        & \diam{d}{W} \le \eps \label{eq:neighborhood3}
      \end{align}
    \end{subequations}
\end{lemma}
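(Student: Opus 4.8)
The plan is to produce $W$ as a \emph{finite} intersection of preimages of a small ball around $x^*$ under the iterates of $f$. The point of using preimages is to force forward invariance; the point of the intersection being finite — which is exactly what the strengthened hypothesis $f^{[n]}(U)\to\{x^*\}$ gives, as opposed to mere pointwise convergence of the orbits — is to keep $W$ open. (Neither completeness nor properness of $(\Domain,d)$ is needed for this lemma; only continuity of $f$, the existence of the fixed point, and the uniform convergence on $U$.)

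First I would fix the scale. Since $U$ is an open neighborhood of $x^*$, there is $r_0>0$ with $\bar{B}(x^*,r_0)\subseteq U$; set $\eta=\min\{r_0,\eps/2\}$, so that $B(x^*,\eta)\subseteq U$ and $2\eta\le\eps$. Interpreting $f^{[n]}(U)\to\{x^*\}$ in the standard (uniform) sense — for every $\delta>0$ there is a stage beyond which $f^{[n]}(U)\subseteq B(x^*,\delta)$ — I would pick $N\in\nats_1$ such that $f^{[n]}(U)\subseteq B(x^*,\eta)$ for all $n\ge N$, and then define
\[
  W \;=\; \bigcap_{k=0}^{N-1} f^{-k}\!\left(B(x^*,\eta)\right),
\]
with the convention $f^{[0]}=\mathrm{id}$, so that the $k=0$ term is $B(x^*,\eta)$ itself.

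Then I would verify the three requirements. \emph{$W$ is open:} $f$ is continuous, hence so is each iterate $f^{[k]}$, hence each $f^{-k}(B(x^*,\eta))$ is open, and a finite intersection of open sets is open. \emph{$x^*\in W$:} clear, since $f^{[k]}(x^*)=x^*\in B(x^*,\eta)$ for every $k$. \emph{Small diameter \eqref{eq:neighborhood3}:} $W\subseteq B(x^*,\eta)$, so $\diam{d}{W}\le 2\eta\le\eps$. \emph{Forward invariance \eqref{eq:neighborhood2}:} fix $x\in W$; showing $f(x)\in W$ amounts to showing $f^{[j]}(x)\in B(x^*,\eta)$ for $j=1,\dots,N$. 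For $1\le j\le N-1$ this is built into $x\in W$; for $j=N$ we use $x\in W\subseteq B(x^*,\eta)\subseteq U$, so that $f^{[N]}(x)\in f^{[N]}(U)\subseteq B(x^*,\eta)$ by the choice of $N$.

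I do not expect a serious obstacle — the lemma is essentially a repackaging of the uniform-convergence hypothesis. The one subtlety to get right is that $U$ itself need not satisfy $f(U)\subseteq U$, so the naive candidate $W=U\cap B(x^*,\eps/2)$ is not forward invariant; the preimage intersection exists precisely to ``shepherd'' every orbit that starts in $W$ through $B(x^*,\eta)$ for its first $N$ steps, after which the tail is controlled by $f^{[n]}(U)\subseteq B(x^*,\eta)$. Ensuring that only finitely many preimages are needed (so $W$ stays open) is the reason the hypothesis is stated as $f^{[n]}(U)\to\{x^*\}$ rather than just $f^{[n]}(x)\to x^*$ for each $x\in U$.
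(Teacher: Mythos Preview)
Your proposal is correct and follows essentially the same approach as the paper: shrink to a small ball inside $U$, use the uniform convergence $f^{[n]}(U)\to\{x^*\}$ to find a stage $N$ after which iterates land in that ball, and take $W$ to be the finite intersection of preimages of the ball under $f^{[0]},\dots,f^{[N-1]}$. The only cosmetic difference is that the paper first replaces $U$ by the small ball and then finds $k$ with $f^{[k]}(U)\subseteq U$, whereas you keep the original $U$ and use $f^{[N]}(U)\subseteq B(x^*,\eta)$; your version is slightly more explicit about openness and membership of $x^*$.
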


\begin{pproof}
%
  From the hypothesis of the theorem there exists an open neighborhood $U$ such that $f^{[n]}(U) \rightarrow \{x^*\}$. This implies that any open subset $V$ of $U$ satisfies $f^{[n]}(V) \to \{x^*\}$. Therefore we can choose a $V = \Int(\bar{B}(x^*, \eps))$
such that $\diam{d}{V} \le \eps$ and $f^{[n]}(V) \to \{x^*\}$. For simplicity of the notation we just assume refer to $V$ as $U$ and so $\diam{d}{U} \le \eps$.

  Starting from $U$ we prove the existence of $W$. For this, we will prove that there exists an open neighborhood $W$
of $x^*$ such that $f(W) \subset W$ and $W \subset U$. The latter implies $f^{[n]}(W) \rightarrow \{x^*\}$ and $\diam{d}{W} \le \eps$.

  Since $f^{[n]}(U) \rightarrow \{x^*\}$, there is an integer $k$ such that $f^{[k]}(U) \subseteq U$. Let
  \[ W = \bigcap_{j = 0}^{k - 1} f^{[-j]}(U) \subseteq U \]

  \noindent Then for any $x \in W$ and for any $1 \le j \le k - 1$ it holds that $x \in f^{[-j]}(U)$ and thus $f(x) \in f^{[-(j - 1)]}(U)$. Moreover $x \in U$, so that
$f^{[k]}(x) \in f^{[k]}(U) \subset U$ and thus $f(x) \in f^{[-(k - 1)]}(U)$. Hence $x \in W$ implies $f(x) \in W$, which was to be shown. The diameter of $W$ can be bounded by the diameter of $U$ and hence is less that $\eps$.
\end{pproof}

\vspace{4pt}

  We now proceed to the main line of the proof. The construction follows three steps:
\begin{CompactEnumerate}[label=\Roman*.]
  \item We first construct a metric $d_M$, which is topologically equivalent to $d$, and with respect to which $f$ is non-expanding. It also holds that $d_M(x, y) \ge d(x, y)$ for
        all $x, y \in \Domain$ and therefore Property \eqref{eq:converseCond2} can be transferred from $d_M$ to $d$.
  \item Given $d_M$, we proceed to construct a ``distance'' function $\rho_{c, \eps}$, which satisfies~\eqref{eq:converseCond1} and all the metric properties except maybe for the triangle inequality. Moreover $\rho_{c, \eps}$ satisfies that $\rho_{c, \eps}(x, y) \ge d_M(x, y)$ if $\max\{d(x^*, x), d(x^*, y)\} \ge \eps$, and therefore \eqref{eq:converseCond2} is
        preserved.
  \item Given $\rho_{c, \eps}$, we construct the sought after metric $d_{c, \eps}$ by taking it equal to the $\rho_{c, \eps}$-geodesic distance. Given the properties of $\rho_{c,\eps}$ and the definition of $d_{c, \eps}$, we can prove that $d_{c,\eps}$ is a metric and Properties~\eqref{eq:converseCond1} and~\eqref{eq:converseCond2} hold.
\end{CompactEnumerate}


\paragraph{I. Construction of $\mathbf{d_M}$ \\}

  In the fist step of the construction we define an metric $d_M$ as
  \[ d_M(x, y) = \sup_{n \in \nats}\{ d(f^{[n]}(x), f^{[n]}(y)) \}\]
\noindent and we show that $f$ is non-expanding with respect to $d_M$.

\begin{lemma}[\cite{Meyers1967}] \label{lem:proofMeyersLemmaI}
  For the $d_M$ function defined above we have that:
\begin{Enumerate}
  \item $d_M$ is well defined and satisfies all the metric properties (see Definition \ref{def:dMetric}).
  \item $d_M$ is topologically equivalent with $d$.
\end{Enumerate}
\end{lemma}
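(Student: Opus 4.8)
The plan is to get the metric axioms and the non-expansion of $f$ essentially for free from the termwise structure of the definition, and to spend the real effort on showing that $d$ and $d_M$ have exactly the same Cauchy sequences. For finiteness, note that $f^{[n]}(x)\to x^*$ and $f^{[n]}(y)\to x^*$ by hypothesis, so $d(f^{[n]}(x),f^{[n]}(y))\to 0$ by the triangle inequality; a null sequence is bounded, so the supremum defining $d_M(x,y)$ is finite. Non-negativity and symmetry are inherited termwise from $d$; the triangle inequality follows since $d(f^{[n]}(x),f^{[n]}(z))\le d(f^{[n]}(x),f^{[n]}(y))+d(f^{[n]}(y),f^{[n]}(z))\le d_M(x,y)+d_M(y,z)$ for every $n$, so the supremum on the left stays bounded by the same quantity; and $d_M(x,y)=0$ forces the $n=0$ term $d(x,y)$ to vanish, i.e. $x=y$. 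The non-expansion of $f$ is a mere index shift, $d_M(f(x),f(y))=\sup_{n\ge 0}d(f^{[n+1]}(x),f^{[n+1]}(y))=\sup_{n\ge 1}d(f^{[n]}(x),f^{[n]}(y))\le d_M(x,y)$. Since the $n=0$ term also gives $d_M(x,y)\ge d(x,y)$, every $d_M$-Cauchy sequence is $d$-Cauchy, and only the converse implication is left for topological equivalence.

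For the converse, I would take a $d$-Cauchy sequence $(x_k)$; by completeness of $(\Domain,d)$ it converges in $d$ to some $\bar x$, and $C:=\{x_k:k\in\nats\}\cup\{\bar x\}$ is compact. The crux is to upgrade the pointwise convergence $f^{[n]}(z)\to x^*$ to uniform convergence on $C$, i.e. to show $\eps_n:=\sup_{z\in C}d(f^{[n]}(z),x^*)\to 0$. Given $\delta>0$, the hypothesis on the neighbourhood $U$ of $x^*$ provides an $L$ with $f^{[m]}(U)\subseteq B(x^*,\delta)$ for all $m\ge L$; for each $z\in C$ pick $N_z$ with $f^{[N_z]}(z)\in U$, and by continuity of $f^{[N_z]}$ an open $V_z\ni z$ with $f^{[N_z]}(V_z)\subseteq U$, so that $f^{[n]}(V_z)\subseteq f^{[\,n-N_z\,]}(U)\subseteq B(x^*,\delta)$ for every $n\ge N_z+L$; a finite subcover $V_{z_1},\dots,V_{z_m}$ of $C$ and $M:=\max_i(N_{z_i}+L)$ then yield $f^{[n]}(C)\subseteq B(x^*,\delta)$ for all $n\ge M$, which is the claim.

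To finish, given $\delta>0$ choose $N$ with $\eps_n\le\delta/4$ for all $n\ge N$ (possible since $\eps_n\to 0$), so that $d(f^{[n]}(x_k),f^{[n]}(x_l))\le 2\eps_n\le\delta/2$ for all such $n$ and all $k,l$. For each of the finitely many indices $n<N$, continuity of $f^{[n]}$ and $x_k\to\bar x$ give $f^{[n]}(x_k)\to f^{[n]}(\bar x)$, hence $(f^{[n]}(x_k))_k$ is $d$-Cauchy and there is $K_n$ with $d(f^{[n]}(x_k),f^{[n]}(x_l))\le\delta/2$ for $k,l\ge K_n$; taking $K:=\max_{n<N}K_n$, for $k,l\ge K$ we get $d(f^{[n]}(x_k),f^{[n]}(x_l))\le\delta/2$ for every $n\ge 0$, whence $d_M(x_k,x_l)\le\delta/2<\delta$ and $(x_k)$ is $d_M$-Cauchy. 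I expect the main obstacle to be the uniform-convergence claim $\eps_n\to 0$: pointwise convergence of the iterates does not give it by itself, and this is exactly where the hypothesis $f^{[n]}(U)\to\{x^*\}$, the continuity of $f$, and the compactness of $C$ (hence the completeness assumption) are all used, through the finite-subcover argument; the rest is routine bookkeeping.
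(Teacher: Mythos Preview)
Your argument is correct. For part~1 (metric axioms, finiteness, non-expansion) you do essentially what the paper does, with a slightly cleaner triangle inequality: the paper argues that the supremum defining $d_M$ is actually attained at some finite $n$ and then applies the triangle inequality for $d$ at that index, whereas you bound each term uniformly and pass to the supremum, which avoids that detour.

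For part~2 your route is genuinely different. The paper does not work with Cauchy sequences directly; instead it fixes a point $x$, uses the forward-invariant neighbourhood $W$ of $x^*$ (with $f(W)\subseteq W$, constructed in Lemma~\ref{lem:neighborhood}) and the index $\nu(x)=\min\{n:f^{[n]}(x)\in W\}$, and shows that for every $\eta>0$ there is $\delta>0$ such that $d(x,y)<\delta$ forces $d_M(x,y)<\eta$. That is, it proves the identity map $(\Domain,d)\to(\Domain,d_M)$ is continuous at each point, which is a stronger local statement than mere Cauchy-equivalence. You instead take a $d$-Cauchy sequence, pass to its $d$-limit $\bar x$ by completeness, and run a finite-subcover argument on the compact set $C=\{x_k\}\cup\{\bar x\}$ to get uniform convergence $\sup_{z\in C}d(f^{[n]}(z),x^*)\to 0$; this handles all large $n$ at once, and the finitely many small $n$ are dispatched by continuity of each $f^{[n]}$. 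Your argument matches the paper's stated definition of topological equivalence (same Cauchy sequences) more directly, and avoids needing the invariance $f(W)\subseteq W$; the paper's argument, on the other hand, yields the sharper conclusion that $d$ and $d_M$ induce the same topology (same convergent sequences, same open sets), which is what ``topologically equivalent'' usually means and what is used later in the construction.
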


  The proof of Lemma \ref{lem:proofMeyersLemmaI} can be found in Section \ref{sec:meyerssOmittedProofs} where for completeness 
we keep the proofs that were already proved by Meyers's \cite{Meyers1967}.

\medskip
  For our purposes we also observe that by the definition of $d_M$ the following holds
\[ d_M(x, y) \ge d(x, y) \]
\noindent and hence if $d_M(x, y) \le \eps$ then also $d(x, y) \le \eps$, therefore $d_M$ satisfies \eqref{eq:converseCond2}.

\paragraph{II. Construction of $\mathbf{\rho_c}$ \\}

  We begin by defining $K_n$ to be the closure of $f^n(W)$ for $n \ge 0$, in particular we have that $K_0 = W$ and hence by 
Lemma \ref{lem:neighborhood} we have that $\diam{d}{K_0}$. Also we define $K_{(-n)} = f^{[-n]}(K_0)$, so that our assumption 
$f^{[n]}(W) \to \{x^*\}$ implies
  \begin{equation} \label{eq:meyersProof2}
    K_n \to \{x^*\} ~~~~~ \text{as }~ n \to \infty.
  \end{equation}
\noindent For $x \in K_0 \setminus \{x^*\}$ we set $n(x) = \max_{x \in K_n}\{n\} \ge 0$. The fact that $n(x)$ is finite is 
guaranteed by (\ref{eq:meyersProof2}). To see this assume that there is an infinite sequence $n_1, n_2, \dots$ such that 
$x \in K_{n_i}$ which implies that $x \in \bigcap_{i = 1}^{\infty} K_i$ which definitely contradicts \eqref{eq:meyersProof2}. 
We define also $n(x^*) = \infty$, and for $x \in \Domain \setminus K_0$ set 
$n(x) = - \min_{f^{[m]}(x) \in K_0} \{m\} = \max_{x \in K_n}\{n\} < 0$ which again is finite because of condition 2. Let
$\kappa(x, y) = \min\{n(x), n(y)\}$, we define $\rho_c$ to be
  \[ \rho_c(x, y) = c^{\kappa(x, y)} d_M(x, y). \]

\begin{lemma}[\cite{Meyers1967}] \label{lem:proofMeyersLemmaII}
  For the $\rho_c$ function defined above we have that:
\begin{Enumerate}
  \item $\rho_c$ is well defined and satisfies the metric properties (see Definition \ref{def:dMetric}), except the      
        triangle inequality \ref{eq:dMetric4}.
  \item $f$ is a contraction map with respect to $\rho_c$ with contraction constant $c$.
          \begin{equation} \label{eq:rhoc0}
            \rho_c(f(x), f(y)) \le c \cdot \rho_c(x, y).
          \end{equation}
\end{Enumerate}
\end{lemma}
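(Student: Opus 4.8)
The plan is to dispatch the two claims separately, putting essentially all the work into the contraction bound~\eqref{eq:rhoc0}; the metric‑property claims are short. For those, I would first note that $\rho_c$ is real‑valued: $\kappa(x,y)=\min\{n(x),n(y)\}$ is a finite integer unless $n(x)=n(y)=\infty$, and by the discussion preceding the lemma (which uses~\eqref{eq:meyersProof2} to show that $n(\cdot)$ is finite on $\Domain\setminus\{x^*\}$) the latter happens exactly when $x=y=x^*$, where we put $\rho_c(x^*,x^*)=0$ consistently with $d_M(x^*,x^*)=0$. Non‑negativity~\ref{eq:dMetric1} is then immediate from $c^{\kappa(x,y)}>0$ and $d_M\ge 0$. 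For~\ref{eq:dMetric2}: if $x=y$ then $d_M(x,y)=0$, hence $\rho_c(x,y)=0$; conversely $\rho_c(x,y)=0$ forces $d_M(x,y)=0$, since $c^{\kappa(x,y)}>0$ whenever $\kappa(x,y)$ is finite and the case $\kappa(x,y)=\infty$ has already been identified with $x=y=x^*$, and then $x=y$ because $d_M$ is a metric (Lemma~\ref{lem:proofMeyersLemmaI}). Symmetry~\ref{eq:dMetric3} is inherited from that of $\kappa$ and of $d_M$. No claim is made about the triangle inequality~\ref{eq:dMetric4} — this is exactly the defect that the geodesic‑closure Step~III repairs.

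For the contraction bound, I would reduce it to a single monotonicity statement about $n(\cdot)$. Since $d_M(x,y)=\sup_{n\in\nats} d(f^{[n]}(x),f^{[n]}(y))$ is a supremum over all iterates, $d_M(f(x),f(y))\le d_M(x,y)$, i.e.\ $f$ is non‑expanding for $d_M$. Because also $0<c<1$, the inequality
\[ \rho_c(f(x),f(y))=c^{\kappa(f(x),f(y))}d_M(f(x),f(y))\ \le\ c\cdot c^{\kappa(x,y)}d_M(x,y)=c\,\rho_c(x,y) \]
will follow once we show $\kappa(f(x),f(y))\ge \kappa(x,y)+1$; and since $\kappa$ is a minimum over its two arguments, it suffices to prove the pointwise bound $n(f(z))\ge n(z)+1$ for every $z\in\Domain$ (the case $z=x^*$ being trivial, as $f(x^*)=x^*$ and $n(x^*)=\infty$).

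To prove $n(f(z))\ge n(z)+1$, recall $n(w)=\max\{m\in\mathbb{Z} : w\in K_m\}$, so it is enough to establish the single inclusion $z\in K_m\Rightarrow f(z)\in K_{m+1}$ for all $m\in\mathbb{Z}$. For $m\ge 0$ this is continuity of $f$ together with the elementary fact $f(\Clos(A))\subseteq\Clos(f(A))$: applied with $A=f^{[m]}(W)$ it gives $f(z)\in\Clos(f^{[m+1]}(W))=K_{m+1}$. For $m=-k<0$, by definition $z\in f^{[-k]}(K_0)$ means $f^{[k]}(z)\in K_0$, whence $f^{[k-1]}(f(z))=f^{[k]}(z)\in K_0$, i.e.\ $f(z)\in f^{[-(k-1)]}(K_0)=K_{m+1}$ (reading $f^{[0]}(K_0)=K_0$ when $k=1$). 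Thus $n(f(z))\ge n(z)+1$, which closes the contraction argument; here I would also use $f(W)\subseteq W$ from Lemma~\ref{lem:neighborhood} to see that the $K_m$ form a decreasing chain in $m$, so that the two cases of the definition of $n(\cdot)$ (according to whether $z\in K_0$) are mutually consistent.

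The argument is not deep; the one place demanding attention is the bookkeeping around $K_m$ and $n(\cdot)$ across the sign of the index — checking that the inclusion $z\in K_m\Rightarrow f(z)\in K_{m+1}$ holds uniformly in $m$, that $n(\cdot)$ is genuinely finite on $\Domain\setminus\{x^*\}$ so $\rho_c$ is real‑valued, and that the conventions at $x^*$ and along the diagonal $x=y$ are compatible. Everything else reduces to the one‑line estimate above.
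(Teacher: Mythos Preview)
Your proposal is correct and is exactly the argument the paper has in mind: the paper itself does not spell out a proof of this lemma, saying only that it ``is almost immediate from the definition of $\rho_c$'' and referring to Meyers~\cite{Meyers1967}. Your write-up makes explicit the two ingredients that constitute this ``almost immediate'' step --- the non-expansion of $f$ under $d_M$ and the monotonicity $n(f(z))\ge n(z)+1$ obtained from the inclusion $z\in K_m\Rightarrow f(z)\in K_{m+1}$ --- and correctly handles the bookkeeping at $x^*$ and across the sign of the index $m$.
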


  The proof of Lemma \ref{lem:proofMeyersLemmaII} is almost immediate from the definition of $\rho_c$, but for a detailed 
explanation we refer to the initial proof by Meyers \cite{Meyers1967}.

\paragraph{III. Construction of $\mathbf{d_c}$ \\}

  In this last step what we do is that we assign the distance between two points to be the length of the 
shortest path that connects these two points, with the lengths computed according to $\rho_c$. Then the 
distance satisfies the triangle inequality because of the shortest path property.

  Formally, denote by $S_{xy}$ the set of chains $s_{xy} = (x = x_0, x_1, \dots, x_m = y)$ from $x$ to $y$ with 
associated lengths $L_c(s_{xy}) = \sum_{i = 1}^m \rho_c(x_i, x_{i - 1})$. We define
  \begin{equation} \label{eq:meyersProof3}
    d_c(x, y) = \inf\{L_c(s_{xy}) \mid s_{xy} \in S_{xy}\}.
  \end{equation}

\begin{lemma}[\cite{Meyers1967}] \label{lem:proofMeyersLemmaIII}
  For the $d_c$ function defined above we have that:
\begin{Enumerate}
  \item $d_c$ is well defined and satisfies all the metric properties (see Definition \ref{def:dMetric}).
  \item $f$ is a contraction map with respect to $d_c$ with contraction constant $c$.
          \begin{equation} \label{eq:rhoc1}
            d_c(f(x), f(y)) \le c \cdot d_c(x, y).
          \end{equation}
  \item $d_c$ is topologically equivalent with $d$ and hence $(\Domain, d_c)$ is a complete metric space.
\end{Enumerate}
\end{lemma}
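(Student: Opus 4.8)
The plan is to verify the three assertions of Lemma~\ref{lem:proofMeyersLemmaIII} in the order stated; parts (1) (apart from definiteness) and (2) are routine consequences of the chain construction and Lemma~\ref{lem:proofMeyersLemmaII}, whereas definiteness and part (3) need some care. For the metric axioms: $d_c(x,y)\le\rho_c(x,y)<\infty$ by taking the two-term chain $(x,y)$, so $d_c$ is finite; $d_c(x,x)=0$ via the trivial chain; symmetry follows by reversing a chain, since $\rho_c$ is symmetric; and the triangle inequality follows by concatenating a chain from $x$ to $z$ with one from $z$ to $y$ and passing to the infimum over $S_{xy}$. For part (2), given any chain $(x=x_0,\dots,x_m=y)$, applying $f$ pointwise produces a chain from $f(x)$ to $f(y)$ of length $\sum_i\rho_c(f(x_{i-1}),f(x_i))\le c\sum_i\rho_c(x_{i-1},x_i)$ by Lemma~\ref{lem:proofMeyersLemmaII}; taking the infimum over chains from $x$ to $y$ gives $d_c(f(x),f(y))\le c\cdot d_c(x,y)$.

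The one substantive point in part (1) is definiteness: $d_c(x,y)>0$ whenever $x\neq y$. Since $\rho_c$ violates the triangle inequality and is strongly contracted near $x^*$, a chain could a priori shortcut through a neighborhood of $x^*$ where $\rho_c$ is tiny; I would rule this out with a two-radius ``escaping ball'' estimate. Relabel so that $x\neq x^*$ (at most one of $x,y$ can equal $x^*$). Because the $K_n$ are closed, decreasing (as $f(W)\subseteq W$), and satisfy $K_n\to\{x^*\}$, we have $n(\cdot)\le N$ on a small ball $B(x,r)$ for some integer $N$, hence $\rho_c(u,v)=c^{\min\{n(u),n(v)\}}d_M(u,v)\ge c^{N}d(u,v)$ for all $u,v\in B(x,r)$, using $d_M\ge d$. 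Shrink $r$ so that moreover $r<d(x,y)$ and fix radii $0<r_1<r_2\le r$. For an arbitrary chain $(x=x_0,\dots,x_m=y)$, let $j\ge 1$ be the first index with $x_j\notin B(x,r_1)$, which exists because $y\notin B(x,r_1)$. If $x_j\in B(x,r_2)$, then $x_0,\dots,x_j$ all lie in $B(x,r)$, where $d$ is a metric and $\rho_c\ge c^{N}d$, so the chain's length is at least $c^{N}d(x_0,x_j)\ge c^{N}r_1$. If instead $x_j\notin B(x,r_2)$, then $x_{j-1}\in B(x,r_1)$ and $d(x_{j-1},x_j)>r_2-r_1$, while $n(x_{j-1})\le N$ gives $\rho_c(x_{j-1},x_j)\ge c^{N}d_M(x_{j-1},x_j)\ge c^{N}(r_2-r_1)$, so the chain's length is again at least $c^{N}(r_2-r_1)$. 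Either way every chain from $x$ to $y$ has length at least $c^{N}\min\{r_1,r_2-r_1\}>0$, a bound independent of the chain, so $d_c(x,y)>0$.

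For part (3) I would first establish two local comparisons. As $d_c\le\rho_c$, and near any point $p$ one has $n(\cdot)\ge -m_0$ for a suitable $m_0$ (by continuity, some iterate $f^{[m_0]}$ carries a neighborhood of $p$ into $W$, hence into $K_0$; near $x^*$ one may take $m_0=0$), it follows that $d_c\le\rho_c\le c^{-m_0}d_M$ locally; since $d_M$ is topologically equivalent to $d$ (Lemma~\ref{lem:proofMeyersLemmaI}), $d$-convergence of a sequence implies $d_c$-convergence to the same limit. Conversely, the escaping-ball estimate of the previous paragraph shows that a $d_c$-ball of suitably small radius about any point is contained in any prescribed $d$-ball about that point, so $d_c$-convergence implies $d$-convergence; thus $d_c$ and $d$ have the same convergent sequences with the same limits. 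To upgrade this to the Cauchy-sequence form of equivalence and to completeness, I would invoke properness of $(\Domain,d)$: a $d$-Cauchy sequence is $d$-bounded, hence contained in a compact closed ball, hence has a $d$-convergent subsequence and is therefore $d$-convergent, hence $d_c$-convergent, hence $d_c$-Cauchy. For the reverse direction one shows $d_c$-bounded sets are $d$-bounded: whenever $u\notin K_1$ (equivalently $n(u)\le 0$) one has $\rho_c(u,v)\ge d_M(u,v)\ge d(u,v)$, so for any chain starting at a point $u\notin K_1$, summing its edges up to the first entry into $K_1$ and (if it returns) after the last exit, and bounding the $d$-displacement inside $K_1$ by $\diam{d}{K_1}\le\eps$, yields $d(u,w)\le d_c(u,w)+\eps$ for the endpoints; since $K_1$ is $d$-bounded this makes every $d_c$-bounded set $d$-bounded. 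Running the properness argument once more, a $d_c$-Cauchy sequence is $d$-bounded, has a $d$-convergent subsequence, hence a $d_c$-convergent subsequence, hence (being $d_c$-Cauchy) $d_c$-convergent; this proves $(\Domain,d_c)$ complete and, together with the comparisons above, that $d_c$-Cauchy implies $d$-Cauchy.

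The main obstacle is the definiteness step — bounding the infimum over \emph{all} chains from below by a positive constant, in spite of $\rho_c$ failing the triangle inequality and collapsing near $x^*$; the two-radius argument, resting on the local bound $n(\cdot)\le N$ valid away from $x^*$, is what makes it go through. The only ingredient genuinely beyond Meyers's compact setting is the transfer of boundedness from $d_c$ to $d$ used in the completeness part, which is needed because $\Domain$ is only assumed proper rather than compact.
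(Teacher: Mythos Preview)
Your treatment of the easy metric axioms and of the contraction inequality (part 2) is correct and matches the paper. Your escaping-ball argument for definiteness is correct and is a genuinely different route from Meyers's: the paper instead observes that any chain from $x$ to $y$ (with $n(x)\le n(y)$) either lies entirely in $\Domain\setminus K_{n(y)+1}$ or has a last link leaving $K_{n(y)+1}$, yielding $d_c(x,y)\ge c^{n(y)}\min\{d_M(x,y),d_M(x,K_{n(y)+1})\}>0$. Your argument is more hands-on but equally valid.

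There is, however, a gap in your topological-equivalence step. Your escaping-ball estimate is established only for a center $x\neq x^*$: you explicitly relabel to ensure this and use that $n(\cdot)\le N$ on a small ball around $x$, which fails at $x^*$ since $n(x^*)=\infty$. So the claim that ``a $d_c$-ball of suitably small radius about \emph{any} point is contained in any prescribed $d$-ball'' is not justified at $x^*$. The paper treats $x^*$ separately: given $\eta>0$ it chooses $N(\eta)$ with $K_{N(\eta)}\subseteq B_{d_M}(x^*,\eta/2)$, so any chain to a point $y$ with $d_M(x^*,y)>\eta$ must contain a link exiting $K_{N(\eta)}$, contributing at least $c^{N(\eta)}\eta/2$. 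You can also fill the gap with your own later tools: once you have $d_c$-bounded $\Rightarrow$ $d$-bounded and $d$-convergence $\Rightarrow$ $d_c$-convergence, a sequence $d_c$-converging to $x^*$ is $d$-bounded, and by properness every subsequence has a further $d$-convergent subsubsequence whose limit must be $x^*$ (by the other implication and uniqueness of $d_c$-limits); hence the whole sequence $d$-converges to $x^*$. Either fix is short, but as written the argument is incomplete.

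Your completeness argument via properness is correct and differs from the paper's, which instead shows directly that a $d_c$-Cauchy sequence not converging to $x^*$ eventually has $n(x_n)<N$ for some fixed $N$, and then uses a bound $d_c\ge c^{k}d_M$ on chains avoiding a suitable $K_k$ to deduce $d_M$-Cauchyness. Your route exploits exactly the properness hypothesis of Theorem~\ref{th:converse} and is arguably more transparent in the non-compact setting.
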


  The proof of Lemma \ref{lem:proofMeyersLemmaIII} can be found in Section \ref{sec:meyerssOmittedProofs}. 
  
  We know need to prove two lemmas to prove that Meyers's construction also satisfies \eqref{eq:converseCond2}.

\begin{lemma} \label{lem:proofOursLemmaIII1}
  Consider any $x \neq x^*$ and $y \neq x, x^*$ and assume that $y \notin K_0$, then
\begin{equation} \label{eq:ourProof1}
  d_c(x, y) \ge \min\{d_M(x, y), d_M(x, K_{0})\} > 0.
\end{equation}
\end{lemma}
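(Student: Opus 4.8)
The plan is to bound $d_c(x,y)$ from below directly from its definition as the infimum of chain lengths $L_c(s_{xy})$: I will show that \emph{every} chain $s_{xy}=(x=x_0,x_1,\dots,x_m=y)$ satisfies $L_c(s_{xy})\ge\min\{d_M(x,y),d_M(x,K_0)\}$, and then that this quantity is strictly positive. The per-link fact driving everything is: \emph{if at least one of $u,v$ lies outside $K_0$, then $\rho_c(u,v)\ge d_M(u,v)$}. To justify it, note that $f(W)\subseteq W$ and continuity of $f$ give the nestings $K_0\supseteq K_1\supseteq\cdots$ and $K_0\subseteq K_{-1}\subseteq\cdots$, and $f^{[n]}(x)\to x^*\in W$ gives $\bigcup_{m\ge 0}K_{-m}=\Domain$; hence a point $z\notin K_0$ satisfies $z\notin K_n$ for all $n\ge 0$ but $z\in K_{-m}$ for some $m\ge 1$, so $n(z)\le -1$. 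Therefore $\kappa(u,v)=\min\{n(u),n(v)\}\le -1$, whence $c^{\kappa(u,v)}\ge c^{-1}>1$ and $\rho_c(u,v)=c^{\kappa(u,v)}d_M(u,v)\ge d_M(u,v)$.

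Now fix an arbitrary chain and use that $x_m=y\notin K_0$. If none of $x_0,\dots,x_m$ lies in $K_0$, then every link has both endpoints outside $K_0$, so by the per-link fact and the triangle inequality for $d_M$, $L_c(s_{xy})=\sum_{i=1}^m\rho_c(x_{i-1},x_i)\ge\sum_{i=1}^m d_M(x_{i-1},x_i)\ge d_M(x,y)$. Otherwise, let $j$ be the smallest index with $x_j\in K_0$; in the case $x\notin K_0$ (which, since $K_0\subseteq\bar{B}(x^*,\eps)$ by Lemma~\ref{lem:neighborhood}, is the regime in which the lemma is applied) we have $j\ge 1$, and each link $(x_0,x_1),\dots,(x_{j-1},x_j)$ has its left endpoint outside $K_0$, so the per-link fact applies to all of them; telescoping over just these links gives $L_c(s_{xy})\ge\sum_{i=1}^j d_M(x_{i-1},x_i)\ge d_M(x_0,x_j)=d_M(x,x_j)\ge d_M(x,K_0)$ because $x_j\in K_0$. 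In both cases $L_c(s_{xy})\ge\min\{d_M(x,y),d_M(x,K_0)\}$; taking the infimum over all chains gives $d_c(x,y)\ge\min\{d_M(x,y),d_M(x,K_0)\}$. (If $x\in K_0$ this bound is trivially true, since then $d_M(x,K_0)=0$.)

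For the strict positivity: $d_M(x,y)>0$ since $x\neq y$ and $d_M$ is a metric, and $d_M(x,K_0)>0$ since $K_0$ is closed (it is $d$-closed and $d_M$ is topologically equivalent to $d$) and $x\notin K_0$, so $x$ has positive distance to the closed set $K_0$. Hence $\min\{d_M(x,y),d_M(x,K_0)\}>0$, which is the claimed strict inequality.

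I expect the main obstacle to be the truncation in the middle step: a chain can weave in and out of $K_0$ repeatedly, and the efficient move is \emph{not} to try to bound the whole chain but to stop at its first entry into $K_0$, charging a single $d_M(x,K_0)$ term at that entry — precisely the place where the per-link inequality $\rho_c\ge d_M$ is still available on every link used. Underneath this lies the sign analysis of the exponent $\kappa$ on each surviving link, which rests on the monotonicity $K_0\supseteq K_1\supseteq\cdots$ and on $n(z)\le -1$ for $z\notin K_0$; keeping that bookkeeping straight is the delicate part.
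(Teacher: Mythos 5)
Your proof is correct and follows essentially the same route as the paper's: every chain from $x$ to $y$ either avoids $K_0$, in which case $\kappa\le -1$ on each link gives $\rho_c\ge d_M$ and the triangle inequality for $d_M$ yields $d_M(x,y)$, or it meets $K_0$, in which case truncating the chain at the boundary crossing yields $d_M(x,K_0)$. If anything your write-up is more careful than the paper's terse argument: truncating at the \emph{first} entry into $K_0$ is what actually produces the $d_M(x,K_0)$ term (the paper's ``last link which leaves $K_0$'' phrasing would more naturally give $d_M(y,K_0)$), and you correctly flag that the strict positivity implicitly requires $x\notin K_0$, which is exactly the regime in which the lemma is invoked.
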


\begin{prevproof}{Lemma}{lem:proofOursLemmaIII1}
  By definition any chain $s_{x y}$ either lies in $\Domain \setminus K_{0}$ entirely, or has a last link which leaves $
K_0$. If $s_{x y}$ lies in $\Domain \setminus K_{0}$ entirely then $n(x), n(y) < 0$ and hence $d_c(x, y) \ge d_M(x, y)$. 
Otherwise we consider the last link that leaves $K_0$ and we have that the length between $x$ and $y$ according to $d_c$
is greater than the distance with respect to $d_M$ of $x$ from $K_0$ which gives that
\[ d_M(x, K_0) \le d_c(x, y). \]
\end{prevproof}

  The final step is to prove (\ref{eq:converseCond2}). 
  
\begin{lemma} \label{lem:proofOursLemmaIII2}
  \[ \forall x,y \in \Domain: d_{c, \eps}(x, y) \le \eps \implies \min\{d(x^*, x), d(x^*, y), d(x, y)\} \le 2 \eps. \]
\end{lemma}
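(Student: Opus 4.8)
The plan is a short case analysis on whether $x$ and $y$ lie inside $K_0$, using two facts already available: that $d_M(x,y)\ge d(x,y)$ everywhere (Step~I), and the lower bound on $d_{c,\eps}$ supplied by Lemma~\ref{lem:proofOursLemmaIII1}. I will also use repeatedly that $x^*\in W\subseteq K_0$ and that $\diam{d}{K_0}\le \eps$; the latter holds because passing to the closure does not change the $d$-diameter and $\diam{d}{W}\le\eps$ by Lemma~\ref{lem:neighborhood}. Since $d_{c,\eps}$ is a genuine metric (Lemma~\ref{lem:proofMeyersLemmaIII}) it is symmetric, and the conclusion $\min\{d(x^*,x),d(x^*,y),d(x,y)\}\le 2\eps$ is symmetric in $x,y$, so the roles of $x$ and $y$ may always be swapped; we may also assume $x\neq y$, as otherwise $d(x,y)=0$ and we are done.

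First I would clear the case where one of the points lies in $K_0$. If $x\in K_0$, then since $x^*\in K_0$ we get $d(x^*,x)\le \diam{d}{K_0}\le\eps\le 2\eps$; by symmetry the same disposes of $y\in K_0$. Hence it remains to treat $x,y\notin K_0$, and in particular $x\neq x^*$ and $y\neq x^*$ (as $x^*\in K_0$).

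Now $x\neq x^*$, $y\neq x$, $y\neq x^*$ and $y\notin K_0$, which are exactly the hypotheses of Lemma~\ref{lem:proofOursLemmaIII1}. That lemma together with the assumption $d_{c,\eps}(x,y)\le\eps$ gives
\[
  \min\{d_M(x,y),\, d_M(x,K_0)\}\ \le\ \eps .
\]
If the minimum is attained by $d_M(x,y)$, then $d(x,y)\le d_M(x,y)\le\eps\le 2\eps$. If instead $d_M(x,K_0)\le\eps$, then for every $\delta>0$ there is $z\in K_0$ with $d(x,z)\le d_M(x,z)<\eps+\delta$, so by the triangle inequality and $\diam{d}{K_0}\le\eps$,
\[
  d(x,x^*)\ \le\ d(x,z)+d(z,x^*)\ <\ \eps+\delta+\eps\ =\ 2\eps+\delta ;
\]
letting $\delta\to 0$ yields $d(x,x^*)\le 2\eps$. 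In every branch one of $d(x^*,x)$, $d(x^*,y)$, $d(x,y)$ is at most $2\eps$, which is the claim. (If one prefers, properness of $(\Domain,d)$ makes $K_0$ compact, so the infimum defining $d_M(x,K_0)$ is attained and the limiting step is unnecessary.)

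I do not expect a real obstacle: once Lemma~\ref{lem:proofOursLemmaIII1} is in hand the argument is essentially bookkeeping. The only points needing care are confirming $x^*\in K_0$ and $\diam{d}{K_0}\le\eps$, checking that the hypotheses of Lemma~\ref{lem:proofOursLemmaIII1} hold precisely in the residual case $x,y\notin K_0$, and transferring the bound on $d_M(x,K_0)$ to a bound on $d(x,x^*)$ via an arbitrarily close point of $K_0$.
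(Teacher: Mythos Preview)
Your proposal is correct and follows essentially the same route as the paper: dispose of the case where one point lies in $K_0$ (the paper does this by assuming $d(x,x^*),d(y,x^*)\ge\eps$), then invoke Lemma~\ref{lem:proofOursLemmaIII1} and split on whether $d_M(x,y)\le\eps$ or $d_M(x,K_0)\le\eps$. Your version is arguably a bit tidier in explicitly verifying the hypotheses of Lemma~\ref{lem:proofOursLemmaIII1} and in handling the infimum $d_M(x,K_0)$ via an $\eps+\delta$ argument rather than treating it as attained.
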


\begin{prevproof}{Lemma}{lem:proofOursLemmaIII2}
  Let $A = \diam{d}{\bar{B}(x^*, 2 \eps)}$ and without loss of generality $d(x, x^*) \ge d(y, x^*)$.

  If either $d_M(x, x^*) \le \eps$ or $d_M(y, x^*) \le \eps$ then we are done since as we have seen in the 
construction of $d_M$, $d_M(x, y) \ge d(x, y)$, thus either $d(x, x^*) \le \eps$ or $d(y, x^*) \le \eps$ and 
(\ref{eq:converseCond2}) is satisfied. So we may assume that $d(x, x^*) \ge \eps$ and $d(y, x^*) \ge \eps$.
Therefore $x, y \in \Domain \setminus K_0$ and which translates to $n(x), n(y) < 0$. So now using Lemma 
\ref{lem:proofOursLemmaIII1} and we get
  \begin{equation} \label{eq:meyersProof11}
    d_c(x, y) \ge \min\{ d_M(x, y), d_M(x, K_0) \}.
  \end{equation}

  	Now we consider two cases according to the value of $d_M(x, K_0)$. If $d_M(x, K_0) \ge \eps$ then
  \[ d_c(x, y) \le \eps \implies d_M(x, y) \le \eps \le A \implies d(x, y) \le \eps. \]
 \noindent Otherwise if $d_M(x, K_0) \le \eps$ then $d(x, K_0) \le \eps$ and by triangle inequality 
$d(x, x^*) \le 2 \eps$. By our assumption for the relative position of $x$ and $y$ we also get 
$d(y, x^*) \le 2 \eps$ and therefore $x, y \in \bar{B}(x^*, 2 \eps)$. Thus, 
$d(x, y) \le \diam{d}{\bar{B}(x^*, 2 \eps)}$. 
\end{prevproof}
\hfill \qed

\subsection{Omitted Proof of Lemmas proven by Meyers in \cite{Meyers1967}} \label{sec:meyerssOmittedProofs}

\begin{prevproof}{Lemma}{lem:proofMeyersLemmaI}
  The fact that this maximum is finite can be proved using the condition 2. of the theorem. Indeed, since $d(f^{[n]}(x), x^*) \to 0$ and
$d(f^{[n]}(y), x^*) \to 0$, for any $\delta > 0$ there is a number $N \in \nats$ such that $d(f^{[n]}(x), x^*) \le \delta$ and $d(f^{[n]}(y), x^*) \le \delta$ for
all $n > N$. Now if let $\delta = d(x, y)$ we get that $\max_{n \ge N}\{ d(f^{[n]}(x), f^{[n]}(y))\} \le d(x, y)$ and therefore
$\max_{n \in \nats}\{ d(f^{[n]}(x), f^{[n]}(y)) \} = \max_{0 \le n \le N}\{ d(f^{[n]}(x), f^{[n]}(y)) \}$. Hence the maximum has a finite value. Observe also that by definition it holds that
  \[ d_M(f(x), f(y)) \le d_M(x, y) \]
\noindent and hence $f$ is a non-xpansion according to $d_M$. It only remains to prove that $d_M$ satisfies the
properties of a metric function. The positive definiteness and symmetry of $d_M$ follow immediately from the
corresponding properties of $d$. The fact that $d_M(x, y) \neq 0$ for $x \neq y$ follows from the fact that
$d(x, y) \le d_M(x, y)$, which follows directly from the definition of $d_M$ since $f^{[0]}(x) = x$. It remains 
to prove the triangle inequality. For this we observe that by the definition of $d_M$ and usign the fact that 
the maximum in the definition of $d_M$ for any $x, y \in \Domain$ is finite, there exists an $n \in \nats$ such
that
\begin{align*}
	d_M(x, z) & = d(f^{[n]}(x), f^{[n]}(z)) \le \\
            & \le d(f^{[n]}(x), f^{[n]}(y)) + d(f^{[n]}(y), f^{[n]}(x)) \le \\
            & \le d_M(x, y) + d_M(y, z).
\end{align*}

  \noindent Thus $d_M$ is indeed a metric. We now show that $d_M$ is topologically equivalent to $d$. From the
inequality $d(x, y) \le d_M(x, y)$ it follows that any $d_M$-convergent sequence is also $d$-convergent, with
the same limit point. To prove the implication in the opposite direction, note that condition 2. of the 
hypotheses of the theorem implies the existence for each $\eta > 0$ of an $N$ such that
  \[ \diam{d}{f^{[n]}(W)} < \eta \text{~~~ for } n > N. \]
\noindent For each $x \in \Domain$, it follows from 2. that
\begin{equation} \label{eq:nudef}
  \nu(x) = \min_{n \in \nats, f^{[n]}(x) \in W}\{n\}
\end{equation}
\noindent is finite. Since $f$ is continuous, there is an $\delta > 0$ so small that $d(x, y) < \delta$ implies
\begin{equation} \label{eq:meyersProof1}
  f^{[\nu(x)]}(y) \in W \text { and } d(f^{[j]}(x), f^{[j]}(y)) < \eta \text{   for } 0 \le j \le N + \nu(x).
\end{equation}
By \eqref{eq:neighborhood2} we have $f^{[n + N + \nu(x)]}(x) \in f^{[n + N]}(W)$ and 
$f^{[n + N + \nu(x)]}(y) \in f^{[n + N]}(W)$ for all $n > 0$, so that the (\ref{eq:meyersProof1}) implies
  \[ d(f^{[j]}(x), f^{[j]}(y)) < \eta \text{~~~ for } j > N + \nu(x). \]
\noindent Thus $d(x, y) \le \delta$ implies $d_M(x, y) \le \eta$. This shows that a sequence which is 
$d$-convergent to $x$ is also $d_M$-convergent to $x$, completing the proof of topological equivalence. Finally
since $d$ and $d_M$ are topologically equivalent and $d$ is complete for $\Domain$ it follows that $d_M$ is 
also complete for $\Domain$.
\end{prevproof}

\begin{prevproof}{Lemma}{lem:proofMeyersLemmaIII}
  We will prove that $d_c$ is the desired metric. That $f$ is a contraction with constant $c$ with respect to 
$d_c$ follows by applying (\ref{eq:rhoc1}) to the links $[x_{i - 1}, x_i]$ of any chain $s_{xy}$. Clearly $d_c$ 
is symmetric and $d_c(x, x) = 0$. The triangle law holds since following a $s_{xy}$ with a $s_{yz}$ yields a 
$s_{xz}$. It remains to show that it is positive definite.

  Consider any $x \neq x^*$ and $y \neq x$ and assume $n(x) \le n(y)$ without loss of generality. If 
$y \neq x^*$, any chain $s_{xy}$ either lies in $\Domain \setminus K_{n(y) + 1}$, or has a last link which 
leaves $K_{n(y) + 1}$, so that 
\begin{equation} \label{eq:meyersProof4}
    d_c(x, y) \ge c^{n(y)} \min\{d_M(x, y), d_M(x, K_{n(y) + 1})\} > 0.
  \end{equation}

\noindent The remaining case, $y = x^*$ is covered by
  \begin{equation} \label{eq:meyersProof5}
  d_c(x, y) \ge c^{n(x)} d_M(x, K_{n(x) + 1}) > 0.
  \end{equation}

\noindent Thus $d_c$ is a distance metric. We now have to prove that $d_c$ is equivalent to $d_M$. Let 
$B_{\nu} = \Domain \setminus f^{[-\nu]}(W)$ for $\nu \ge 0$, so that the definition of $\nu(x)$ 
(\ref{eq:nudef}) implies $d_M(x, B_{\nu(x)}) > 0$ and $n(x) \ge - \nu(x)$. For any $x \neq x^*$, if $y$ obeys
  \begin{equation} \label{eq:meyersProof6}
    d_M(x, y) < \delta(x) = \min\{ d_M(x, K_{n(x) + 1}), d_M(x, B_{\nu(x)}) \}
  \end{equation}
 \noindent then $n(x) \ge - \nu(x)$, so that (\ref{eq:meyersProof3}) and (\ref{eq:meyersProof4}), the last with
 $x$ and $y$ interchanged, imply
  \begin{equation} \label{eq:meyersProof7}
    c^{n(x)} d_M(x, y) \le d_c(x, y) \le \rho_c(x, y) \le c^{- \nu(x)} d_M(x, y).
  \end{equation}

 \noindent Now choose $k(x) > \max\{0, n(x)\}$ such that $z \in K_{k(x)}$ implies 
 $d_M(z, x^*) < d_c(x, x^*)/2$. Then $d_c(x, K_{k(x)}) \ge d_c(x, x^*)/2$, so that if $y$ obeys
  \begin{equation} \label{eq:meyersProof8}
    d_c(x, y) < d_c(x, x^*)/2
  \end{equation}
  \noindent then only chains disjoint from $K_{k(x)}$ need enter (\ref{eq:meyersProof3}), implying
  \begin{equation}
    \label{eq:meyersProof9}
    d_c(x, y) \ge c^{k(x)} d_M(x, y).
  \end{equation}
 \noindent In particular, if
  \[ d_c(x, y) < \min\{ d_c(x, x^*)/2, c^{k(x)} \delta(x) \} \]
  \noindent then with (\ref{eq:meyersProof8}) and (\ref{eq:meyersProof9}) this implies (\ref{eq:meyersProof6}) and hence (\ref{eq:meyersProof7}) applies. Thus
$d_c(x_n, x) \to 0$ whenever $d_M(x_n, x) \to 0$.

 	Now if $x = x^*$, note first that if $d_M(x^*, y) < d_M(x^*, B_0)$, then
  \begin{equation} \label{eq:meyersProof10}
		d_c(x^*, y) \le \rho_c(x^*, y) \le d_M(x^*, y).
  \end{equation}

 \noindent Also note that for any $\eta > 0$, $f^{[n]}(W) \to \{x^*\}$ guarantees an $N(\eta) > 0$ such that 
$d_M(x^*, z) < \eta/2$ for all $z \in K_{N(\eta)}$. Then $d_M(x^*, y) > \eta$ implies that 
$d_M(y, K_{N(\eta)}) \ge \eta / 2$ and thus that
\[ d_c(x^*, y) \ge d_c(K_{N(\eta)}, y) \ge c^{N(\eta)} \eta/2 \]

\noindent Hence $d_c(x_n, x^*) \to 0$ if and only if $d_M(x_n, x^*) \to 0$.

  To show that $d_M$-completeness is preserved, assume that $(x_n)$ is a $d_c$-Cauchy sequence and that 
$(X, d_M)$ is complete. If $(x_n)$ does not converge to $x^*$ then since $d_c$ and $d_M$ are equivalent, for
some $N \in \nats$ and all sufficiently large $n$, $n(x_n) < N$.

\noindent Now exactly as above choose $k((x_n)) = P > \max{0, N}$ such that $z \in K_{k((x_n))}$ implies
  \[ d_M(x^*, z) < \inf_{i \in \nats} \left\{ \frac{d_c(x_i, x^*)}{2} \right\} = \frac R 2 \]
\noindent then since $(x_n)$ is a Cauchy sequence there is an $i \in \nats_1$ such that
  \[ d_c(x_p, x_{p + j}) < \frac R 2 \]
\noindent for all $p > i$, and using (\ref{eq:meyersProof9}) with $k(x) = P$, we have
  \[ c^{-P} d_c(x_p, x_{p + j}) \ge d_M(x_p, x_{p + j}) \]

\noindent so that $(x_n)$ is a $d_M$-Cauchy sequence. Therefore since $(\Domain, d_M)$ is complete, the 
topological space $(\Domain, d_c)$ is complete too.
\end{prevproof}

\section{Proof of Theorem \ref{th:cBanach2}} \label{app:clsProofs}

On route to establishing the CLS-completeness of $\Banachh$, we will define an intermediate, syntactic problem $\Banach$, which is similar to $\Banachh$ except that the function $d$ given in the input is not promised to be a metric, and hence a violation of the metricity of $d$ is accepted as a solution.

\begin{definition}
  \em
  $\Banach$ takes as input two functions $f : [0, 1]^3 \to [0, 1]^3$, $d : [0, 1]^3 \times [0, 1]^3 \to \reals$, both represented as arithmetic circuits, and three rational positive constants $\eps$, $\lambda$, $c < 1$. The desired output is any of the following:
  \begin{CompactEnumerate}[label=(O\alph*)]
    \item a point $x \in [0, 1]^3$ such that $d(x, f(x)) \le \eps$ \label{BanachO1}
    \item two points $x, x' \in [0, 1]^3$ disproving the contraction of $f$ w.r.t. $d$ with constant $c$, i.e.\\
          $d(f(x), f(x')) > c \cdot d(x, x') $ \label{BanachO3}
    \item two points $x, x' \in [0, 1]^3$ disproving the $\lambda$-Lipschitz continuity of $f$, i.e. \\
          $|f(x) - f(x')|_1 > \lambda |x - x'|_1$. \label{BanachO4}
    \item four points $x_1, x_2, y_1, y_2 \in [0, 1]^3$ with $x_1 \neq x_2$ and $y_1 \neq y_2$ disproving the $\lambda$-Lipschitz continuity of $d(\cdot, \cdot)$, i.e.
          $|d(x_1, x_2) - d(y_1, y_2)| > \lambda \left( \abs{x_1 - y_1} + \abs{x_2 - y_2} \right)$. \label{BanachO5}
    \item points $x, y, z \in [0,1]^3$ violating any of the metric properties of $d$ (\ref{eq:dMetric1}-\ref{eq:dMetric4} of Definition \ref{def:dMetric}). \label{BanachO6}
  \end{CompactEnumerate}
\end{definition}

  \smallskip

Notice that $\Banach$ is syntactic, namely for any input there exists a solution. We proceed to show that the problem is CLS-complete. 

\begin{theorem} \label{th:cBanach1}
   $\Banach$  is $\CLS$-complete. 
\end{theorem}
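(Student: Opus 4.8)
The plan is to prove both directions: $\Banach\in\CLS$ (containment) and $\CLS$-hardness of $\Banach$. Containment is a routine adaptation of the argument of~\cite{DP11} placing {\sc Contraction Map} in $\CLS$, so the substance is the hardness direction, which amounts to turning the non-constructive proof of Theorem~\ref{th:converse} into a polynomial-time many-one reduction from $\clocal$.

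For containment I would map a $\Banach$ instance $(f,d,\eps,\lambda,c)$ to the $\clocal$ instance on the same domain $[0,1]^3$ with iteration map $f'=f$, potential $p'(x)=\min\{\max\{d(x,f(x)),0\},M\}/M$, where $M=\Theta(\lambda)$ upper bounds $d(x,f(x))$ whenever $d$ is $\lambda$-Lipschitz with zero diagonal (the clamp is there only to force $p'$ syntactically into $[0,1]$), accuracy $\eps'=(1-c)\eps/M$, and a Lipschitz parameter $\lambda'$ polynomial in $\lambda$. If $f$ is a genuine $c$-contraction for $d$, then $p'(f(x))=d(f(x),f(f(x)))/M\le c\,p'(x)$, so a point with $p'(f(x))\ge p'(x)-\eps'$ satisfies $d(x,f(x))\le\eps$, an~\ref{BanachO1}-solution --- unless $d(f(x),f(f(x)))>c\cdot d(x,f(x))$, in which case $(x,f(x))$ is an~\ref{BanachO3}-solution. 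A~\ref{cLocalO2}-solution immediately gives an~\ref{BanachO4}-solution, and a~\ref{cLocalO3}-solution unwinds either to a violation of the $\lambda$-Lipschitzness of $d$ on a quadruple $(x,f(x),x',f(x'))$ (an~\ref{BanachO5}-solution), to a violation of the $\lambda$-Lipschitzness of $f$ (an~\ref{BanachO4}-solution), or --- in the degenerate cases where the clamp in $p'$ is active or $d$ takes a negative or nonzero-diagonal value --- to a broken metric axiom (an~\ref{BanachO6}-solution). Controlling these degenerate cases is the only delicate point, and it is exactly the bookkeeping already done in~\cite{DP11}.

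For hardness I reduce $\clocal$ to $\Banach$, following the three steps behind Theorem~\ref{th:converse} but making each computable by a polynomial-size circuit. Given $(f,p,\eps,\lambda)$, I keep $\tilde f=f$ and build $d(x,y)=g(\max\{p(x),p(y)\})\cdot d_S(x,y)$. In place of Meyers's $d_M$ (Step~I) I use the discrete metric $d_S$: $f$ is trivially non-expanding for it, and $d_S(x,y)={>}(|x-y|_1,0)$ is a legal arithmetic circuit. In place of the Meyers scaling $c^{\kappa(x,y)}$ (Step~II) I use an increasing $g:[0,1]\to[1,\infty)$ with $g(0)=1$ and $g(t-\eps)=\tilde c\,g(t)$ --- morally $g(t)=\tilde c^{-t/\eps}$ --- so that $p(x)/\eps$, computable from the potential, plays the role of the uncomputable iteration count $n(x)$. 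Since a point that is not a $\clocal$-solution obeys $p(f(x))<p(x)-\eps$, one application of $f$ drops $\max\{p(\cdot),p(\cdot)\}$ by at least $\eps$, and hence $d(f(x),f(y))\le g(\max\{p(x),p(y)\}-\eps)\,d_S(f(x),f(y))=\tilde c\,g(\max\{p(x),p(y)\})\,d_S(f(x),f(y))\le\tilde c\,d(x,y)$, i.e.\ Property~\eqref{eq:converseCond1}; when contraction fails at some pair $(x,y)$, strict monotonicity of $g$ forces $\max\{p(f(x)),p(f(y))\}>\max\{p(x),p(y)\}-\eps$, so whichever of $x,y$ attains the outer maximum is a~\ref{cLocalO1}-solution. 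Step~III collapses: since $d_S$ is discrete and $g$ increasing, for distinct $x,y,z$ one has $d(x,z)=g(\max\{p(x),p(z)\})\le g(\max\{p(x),p(y)\})=d(x,y)\le d(x,y)+d(y,z)$, so $d$ already obeys the triangle inequality --- positive-definiteness and symmetry being immediate --- hence is a genuine metric, no~\ref{BanachO6}-solution exists, and no geodesic closure is needed. Taking $\tilde\eps=1/2$, any~\ref{BanachO1}-solution forces $x=f(x)$ (else $d(x,f(x))=g(\max\{p(x),p(f(x))\})\ge g(0)=1$), which is a~\ref{cLocalO1}-solution; taking $\tilde\lambda=\lambda$, an~\ref{BanachO4}-solution is a~\ref{cLocalO2}-solution; and since on off-diagonal pairs $d$ coincides with $g\circ\max$, which is $O(\mathrm{Lip}(g)\cdot\lambda)$-Lipschitz, the Lipschitz parameter of $d$ can be set so that either no~\ref{BanachO5}-solution occurs or each one unwinds to a~\ref{cLocalO3}-solution.

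The main obstacle is realizing $g$ as a polynomial-size arithmetic circuit over $\{+,-,*,\max,\min,>\}$ that is also Lipschitz with a polynomially bounded constant: $\tilde c^{-t/\eps}$ is not rational, and since $1/\eps$ may be exponential in the bit-length of $\eps$, iterating the multiplication by $\tilde c^{-1}$ blows up the circuit size. The remedy I have in mind is to extract the binary digits of $\lfloor t/\eps\rfloor$ with $O(\log(1/\eps))$ comparison gates, assemble $\tilde c^{-\lfloor t/\eps\rfloor}$ by repeated squaring, and interpolate linearly across each length-$\eps$ interval; the resulting piecewise-linear $g$ is continuous, still satisfies $g(t-\eps)=\tilde c\,g(t)$ exactly (linear interpolation commutes with scaling by $\tilde c$), and is $O(1)$-Lipschitz precisely when $\tilde c=1-\Theta(\eps)$ --- the choice that keeps the values of $g$, the size of its circuit, and the Lipschitz constant of $d$ all polynomial in the input size and in $\mathrm{Lip}(p)$. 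That $\tilde c$ must be chosen this close to $1$ is to be expected: a substantially smaller contraction constant would let the \BIP\ reach an $\eps$-approximate fixed point in $o(\poly(1/\eps))$ iterations, contradicting $\CLS\neq\FP$. With such a $g$ the reduction is polynomial-time and correct, which together with containment proves the theorem.
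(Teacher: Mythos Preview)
Your proposal is essentially the paper's own proof: the containment uses $p(x)=d(x,f(x))$ and $\eps'=(1-c)\eps$, and the hardness builds $d(x,y)=g(\max\{p(x),p(y)\})\cdot d_S(x,y)$ with $g$ a piecewise-linear interpolant of $\tilde c^{-t/\eps}$ assembled by bit-extraction and repeated squaring, with $\tilde c=1-\Theta(\eps)$ --- exactly the construction the paper carries out (it writes $\kappa(x,y)=\min\{-p(x)/\eps,-p(y)/\eps\}$ and $B(\kappa)$ for your $g$).

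Two small slips to fix. First, your triangle-inequality line ``$d(x,z)=g(\max\{p(x),p(z)\})\le g(\max\{p(x),p(y)\})=d(x,y)$'' is false in general; you need the two-case split the paper does (after WLOG $p(x)\ge p(z)$, either $p(y)\le p(x)$ and $d(x,z)=d(x,y)$, or $p(y)>p(x)$ and $d(x,y)=d(y,z)=g(p(y))\ge d(x,z)$). Second, from $\max\{p(f(x)),p(f(y))\}>\max\{p(x),p(y)\}-\eps$ it does not follow that the argmax of $\{p(x),p(y)\}$ is a \ref{cLocalO1}-solution: if $p(x)\ge p(y)$ but $p(f(y))\ge p(f(x))$ you only get $p(f(y))>p(y)-\eps$, so $y$ (not $x$) is the witness --- one of the two always works, just not necessarily the one you named.
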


\begin{proof}[Proof of Theorem \ref{th:cBanach1}]

  We first show that $\Banach$ belongs to $\CLS$ even when we disallow \ref{BanachO6}. Starting from an instance $(f, d, \eps, \lambda, c)$ of $\clocal$ we create the following instance
  \[ ( f'(x) = f(x), p(x) = d(x, f(x)), \eps' = (1 - c) \cdot \eps, \lambda' = \lambda ) \]

\noindent Now we have to show that any output of the $\clocal$ with input $(f, p, \eps', \lambda)$ will give us a output of $\Banach$ with input $(f, d, \eps, \lambda, c)$.

\noindent \ref{cLocalO1} $\implies$ If $d(f(x), f(f(x))) > c \cdot d(x, f(x))$ then $(x, f(x))$ satisfies \ref{BanachO3}. Otherwise
  \begin{align*}
    p(f(x)) \ge p(x) - \eps' \Rightarrow d(f(x), f(f(x))) & \ge d(x, f(x)) - \eps' \Rightarrow \\
    c \cdot d(x, f(x)) \ge d(f(x), f(f(x))) & \ge d(x, f(x)) - \eps' \Rightarrow \\
    c \cdot d(x, f(x)) & \ge d(x, f(x)) - (1 - c) \cdot \eps \Rightarrow \\
    (1 - c) \cdot d(x, f(x)) & \le (1 - c) \cdot \eps \Rightarrow \\
    d(x, f(x)) & \le \eps
  \end{align*}

\noindent Therefore $x$ satisfies \ref{BanachO1} and therefore is a solution of $\Banach$.

\noindent \ref{cLocalO2} $\implies$ \ref{BanachO4}.

\noindent \ref{cLocalO3} $\implies$ Without loss of generality let $\norm{x - f(x)} \le \norm{y - f(y)}$. If $x = f(x)$ then if $d(x, f(x)) = 0$ we immediately satisfy \ref{BanachO1} otherwise we satisfy \ref{BanachO6}.
Otherwise we can give $x_1 = x$, $x_2 = f(x)$, $y_1 = y$, $y_2 = f(y)$ and since $x_1 \neq x_2$, $y_1 \neq y_2$ we satisfy \ref{eq:dMetric2} of \ref{BanachO5}.

  This implies that any output of $\clocal$ at the instance $(f', p, \eps', \lambda')$ can produce an output to the instance $(f, d, \eps, \lambda, c)$ of the
$\Banach$ problem. Therefore \\ $\Banach \in \CLS$.

  Now we are going to show the opposite direction and reduce $\clocal$ to $\Banach$. Starting from an instance  $(f, p, \eps, \lambda)$ of $\clocal$ we define for
any $x, y \in [0, 1]^3$,
\[ \kappa(x, y) = \min\left\{ - \frac{p(x)}{\eps}, - \frac{p(y)}{\eps} \right\} \]

\noindent We also remind the reader the definition of the \textit{discrete metric}
\[ d_S(x, y) = 1 \text{ if } x \neq y \text{ and } d_S(x, x) = 0 \]

\noindent Finally we define the \textit{smooth interpolation function} for $w \le 0$
\[ \Bsf{w} = \left( 1 - (\ceil{w} - w) \right) c^{\ceil{w}} + ( \left( \ceil{w} - w \right) ) c^{\ceil{w} + 1} \]

\noindent The basic observation about $\Bsf{\cdot}$ since $c < 1$ is that $c^{\ceil{\kappa(x, y)} + 1} \le B({\kappa(x, y)}) \le c^{\ceil{\kappa(x, y)}}$. \\

\noindent Based on these definitions we create the following instance of $\Banach$
\[ f' = f, d(x, y) = \Bsf{\kappa(x, y)} \cdot d_S(x, y), \eps' = \frac{1}{c}, \lambda' = \max \left\{ \lambda, \ceil{c^{-1 / \eps} \lambda \frac{\ln(1/c)}{\eps}} \right\},  c = 1 - 0.1 \eps \]
\noindent As in the previous reduction we have to show that any result of the $\Banach$ with input \\ $(f, d, \eps', \lambda, c)$ will give us a result of $\clocal$ with
input $(f, p, \eps, \lambda)$. \\

\noindent \ref{BanachO1} $\implies$ If $p(f(x)) \ge p(x)$ then $x$ satisfies \ref{cLocalO1}. Otherwise we can see that $\kappa(x, f(x)) = - p(x) / 2 \eps$ and $x \neq f(x)$ so
\begin{align*}
    d(x, f(x)) \le \eps' \Rightarrow \Bsf{\kappa(x, y)} \le \eps' \Rightarrow \left( \frac{p(x)}{\eps} \right) \log(1 / c) \le \log(\eps') \Rightarrow \frac{p(x)}{\eps} & \le \frac{\log(\eps')}{\log(1/c)} \\
  \Rightarrow p(x) & \le \eps
\end{align*}

\noindent so $p(f(x)) \ge 0 \ge p(x) - \eps$ and so $x$ satisfies \ref{cLocalO1}. 

\noindent \ref{BanachO3} $\implies$ As in the previous case we may assume that $p(f(x)) \le p(x) - \eps$ and that $p(f(y)) \le p(y) - \eps$. 
Without loss of generality we can assume that $p(x) > p(y)$. If also $p(f(x)) \ge p(f(y))$ then $\kappa(x, y) = - p(x) / \eps$ and
$\kappa(f(x), f(y)) = - p(f(x)) / \eps$. Therefore
\begin{align*}
    d(x, y) = \Bsf{\kappa(x, y)}, ~ d(f(x), f(y)) = \Bsf{\kappa(f(x), f(y))}
\end{align*}

  \noindent Now if \ref{BanachO3} is satisfied then
  \[ c^{-\floor{\frac{p(f(x))}{\eps}}} \ge d(f(x), f(y)) = \Bsf{\kappa(f(x), f(y))} > c \cdot \Bsf{\kappa(x, y)} = c \cdot d(x, y) \ge c^2 \cdot c^{-\floor{\frac{p(x)}{\eps}}} \]
  \[ \implies \floor{\frac{p(f(x))}{\eps}} \ge \floor{\frac{p(x)}{\eps}} - 2 \Rightarrow p(f(x)) \ge p(x) - 2 \eps ~~~~~ \footnote{At this point we should have set $\kappa(x, y) = \min\{- 2 p(x) / \eps, 2 p(y) / \eps\}$ to get the inequality $p(f(x)) \ge p(x) - \eps$ but this would complicate the calculations in the rest of the cases. It is clear though that we could scale every parameter so that $\eps$ becomes $2 \eps$ and nothing changes. } \]

\noindent Therefore $x$ satisfies \ref{cLocalO1}.

  Now similarly if $p(f(y)) > p(f(x))$ then $p(f(y)) > p(x) - \eps$. But by our assumption that $p(x) > p(y)$ we get $p(f(y)) > p(y) - \eps$. Therefore $y$
satisfies \ref{cLocalO1}.

\noindent \ref{BanachO4} $\implies$ \ref{cLocalO2}.

\noindent \ref{BanachO5} $\implies$ We will analyze the function $h(x) = c^{-x}$ when $x \in [0, 1/\eps]$. By the mean value theorem we have that the Lipschitz
constant $\ell_h$ of $h$ is less that $\max_{x \in [0, 1 / \eps]} h'(x)$. But
\[ h'(x) = \left( e^{-x \ln c} \right)' = \ln(1/c) c^{-x} \]

\noindent and because $c < 1$ we have that $\max_{x \in [0, 1/\eps]} h'(x) = c^{-1 / \eps} \ln(1/c)$.
  
Let now $\kappa(x_1, x_2) = - p(x_1)/\eps$ and $\kappa(y_1, y_2) = - p(y_1)/ \eps$. Since $x_1 \neq x_2$ and $y_1 \neq y_2$ we have $d(x_1, x_2) = \Bsf{c^{-p(x_1) / \eps}}$ and $d(y_1, y_2) = \Bsf{c^{-p(y_1) / \eps}}$. 
Since $\Bsf{\kappa(x, y)}$ is just an linear interpolation of points that belong to $c^{\kappa{x, y}}$ using the Mean Value Theorem we have that 
$\left| \Bsf{\kappa(x_1, x_2)} - \Bsf{\kappa(y_1, y_2)} \right| \le \max_{x \in [0, 1/\eps]} h'(x) \abs{\frac{p(x_1)}{\eps} - \frac{p(y_1)}{\eps}}$
\begin{align*}
  |d(x_1, x_2) - d(y_1, y_2)| = \left| \Bsf{\kappa(x_1, x_2)} - \Bsf{\kappa(y_1, y_2)} \right| & \le \left( \max_{x \in [0, 1/\eps]} h'(x) \right) \left| \frac{p(x_1)}{\eps} - \frac{p(y_1)}{\eps} \right| \\
  \Rightarrow |d(x_1, x_2) - d(y_1, y_2)| & \le c^{-1 / \eps} \frac{\ln(1/c)}{\eps} \left| p(x_1) - p(y_1) \right|
\end{align*}

  Now if $|p(x_1) - p(y_1)| > \lambda |x_1 - y_1|$ then $x_1, y_1$ satisfy \ref{cLocalO3} and we have a solution for $\clocal$. So $|p(x_1) - p(y_1)| \le \lambda |x_1 - y_1|$ and from
the last inequality we have that
\[ |d(x_1, x_2) - d(y_1, y_2)| \le c^{-1 / \eps} \lambda \frac{\ln(1/c)}{\eps} |x_1 - y_1| \]

But this contradicts with \ref{BanachO5} since $\lambda' = \max \left\{ \lambda, \ceil{c^{-1 / \eps} \lambda \frac{\ln(1/c)}{\eps}} \right\}$.

  Finally it is easy to see that the size of the arithmetic circuits that we used for this reduction is polynomial in the size of the input. The only function
that needs for explanation is that of $d$ and $\lambda'$. We start with the observation that both $c$, $c^{-1}$ are given and have descriptions of size only linear in the description of $\eps$, since $\eps$ is a rational constant. The
difficult term in the description of $d$ is the term $\Bsf{\kappa(x, y)}$. For this, we need to bound the size of $\kappa{x, y}$, let this bound be $A$. Then we can have precomputed the possible digits of $\ceil{\kappa(x, y)}$ using 
$\log(A)$ arithmetic circuits. Finally a final circuit combines the digits in order to get $\ceil{\kappa(x, y)}$. Now to compute $c^{\ceil{\kappa(x, y)}}$ for each $m_i 2^i$ of the $\log(A)$ digits of $\kappa(x, y)$ we compute the 
corresponding power $2^i$ with repeated squaring using $O(i)$ arithmetic gates. Then we combine the results such to compute $c^{\ceil{\kappa(x, y)}}$. This whole process needs $O(\log^2 A)$ arithmetic gates. Since $A \le 1/ \eps$ the
overall circuit for $d$ needs $\poly(1 / \eps)$ arithmetic gates. For $\lambda'$ we can also do a similar process but we have to bound $c^{-\kappa(x, y)}$. We can see that that 
$c^{-\kappa(x, y)} \le c^{-1/\eps} = (1 - 0.1 \eps)^{-1/\eps} \le e^{10}$. Therefore the size of $c^{-1/\eps}$ is bounded its ceil can be computed using a polynomial sized circuit.
\end{proof}

By inspecting the proof of the CLS-completeness of $\Banach$ we realize that, in the CLS-hardness part of the proof, we can actually guarantee that $d$ is a metric. We can thus also establish the CLS-completeness of $\Banachh$.

\begin{proof}[Proof of Theorem \ref{th:cBanach2}]
  Obviously because of Theorem \ref{th:cBanach1}, $\Banachh$ belongs to $\CLS$.

  For the opposite direction, we use the same reduction as in the proof of Theorem \ref{th:cBanach1}. We then prove that $d$ satisfies the desired properties. We
remind that we used the following instance of $\Banachh$ for the reduction
\[ f' = f, d(x, y) = \Bsf{\kappa(x, y)} \cdot d_S(x, y), \eps' = \frac{1}{\sqrt{c}}, \lambda' = \max \left\{ \lambda, \ceil{c^{-1 / \eps} \lambda \frac{\ln(1/c)}{\eps}} \right\},  c = 1 - 0.1 \eps \]

  We first prove that $d$ is always a distance metric.

\noindent \ref{eq:dMetric1} Obvious from the definition of $d$.

\noindent \ref{eq:dMetric2} If $x \neq y$ then $d_S(x, y) > 0$. Also always $c^{\kappa(x, y)} > 0$, therefore $d(x, y) > 0$. Now since $d_S(x, x) = 0$ we also have
$d(x, x) = 0$.

\noindent \ref{eq:dMetric3} It is obvious from the definition of $\kappa$ that $\kappa(x, y) = \kappa(y, x)$ and since $d_S$ is a distance metric, the same is true
for the $d_S$ and thus for $d$.

\noindent \ref{eq:dMetric4} Without loss of generality we assume that $p(x) \ge p(y)$. We consider the following cases
\begin{description}
  \item[$\mathbf{p(x) \ge p(z)}$] then we have $d(x, y) = d(x, z)$ and therefore obviously $d(x, y) \le d(x, z) + d(z, y)$.
  \item[$\mathbf{p(x) \le p(z)}$] then we have $d(x, y) = \Bsf{-\frac{p(x)}{2 \eps}}$, $d(x, z) = d(z, y) = \Bsf{-\frac{p(z)}{2 \eps}}$ but since $p(x) \le p(z)$ obviously
                                  $2 \Bsf{-\frac{p(z)}{2 \eps}} \ge \Bsf{-\frac{p(x)}{2 \eps}}$.
\end{description}

  Finally we will show the completeness of $([0, 1]^3, d)$. We first observe that for all $x \neq y$, $d(x, y) > 1$, this comes from the fact that $c < 1$ and so
$c^{-p(x)/\eps} > 1$.

  Now let $(x_n)$ be a Cauchy sequence then $\forall \delta > 0$, $\exists N \in \nats$ such that $\forall n, m > N$, $d(x_n, x_m) \le \delta$. We set
$\delta = 1/2$ then there exists $N \in \nats$ such that $\forall n, m > N$, $d(x_n, x_m) < 1/2$. But from the previous observation this implies $d(x_n, x_m) = 0$
and since $d$ defines a metric we get $x_n = x_m$. Therefore $(x_n)$ is constant for all $n > N$ and obviously converges. This means that every Cauchy sequence
converges and so $([0, 1]^3, d)$ is a complete metric space.
\end{proof}

\section{Proofs of Corollaries~\ref{cor:converse1} \ref{cor:converse2} }

\begin{prevproof}{Corollary}{cor:converse1}
  Let $d_{c, \eps/2}$ be the distance metric guaranteed by Theorem \ref{th:converse} with parameters $c$, $\eps/2$. Let also $(x_n)$ be the sequence produced by the \BIP. Since $f$ is a contraction with respect to $d_{c, \eps/2}$, we
have $d_{c, \eps/2}(x_n, x^*) \le \frac{c^n}{1 - c} d_{c, \eps/2}(x_0, x_1)$. If we make sure that $d_{c, \eps/2}(x_n, x_{n + 1}) \le \eps / 2$ then according to Theorem \ref{th:converse} 
$d(x_n, x^*) \le \eps$. So the number of steps that are needed are:
  \[ \frac{c^n}{1 - c} d_{c, \eps/2}(x_0, x_1) \le \frac{\eps}{2} \Leftrightarrow n \ge \frac{\log(d_{c, \eps/2}(x_0, x_1)) + \log((2 - 2 c)/\eps)}{\log(1/c)}. \]
\end{prevproof}

\begin{prevproof}{Corollary}{cor:converse2}
    Using Corollary \ref{cor:converse1}, we get that after $n = \frac{\log(d_{c,\delta/2}(x_0, f(x_0))) + \log((2 - 2 c)/\delta)}{\log(1/c)}$ iterations we will have
  $d(x_n, x^*) \le \delta$ or $d(x_{n+1}, x^*) \le \delta$.  Since in $\bar{B}(x^*, \delta)$, $f$ is a contraction with respect to $d$, it certainly must be that $d(x_{n+1}, x^*) \le \delta$. By the same token, $d(x_{n + 1 + m}, x^*) \le c^m d(x_{n+1}, x^*)$, for all $m > 0$. Therefore, to guarantee $d(x_{n + 1 + m}, x^*) \le \eps$, it suffices to take $m \ge \frac{-\log(1/\delta) + \log(1/\eps)}{\log(1/c)}$. So in total we need $n + 1 + m$ iterations, implying the number of iterations stated in the statement of the corollary.
\end{prevproof}

\end{document}